\documentclass[10pt,a4paper,twoside]{article}

\usepackage[latin1]{inputenc}
\usepackage{amssymb}
\usepackage{amsmath}
\usepackage{amsfonts}
\usepackage{amsbsy}
\usepackage{natbib}
\usepackage{eucal}
\usepackage{graphicx}
\usepackage{multirow}
\usepackage[english]{babel}
\usepackage{mathptmx}
\usepackage{float}
\usepackage[font=small]{caption}
\usepackage{array}
\usepackage[colorlinks,citecolor=blue,urlcolor=blue]{hyperref}
\usepackage{amsthm}
\usepackage{comment}
\usepackage{setspace}
\usepackage{bm}
\doublespacing

\usepackage{arxiv}
\newtheorem{theorem}{Theorem}[section]

\newtheorem{remark}{Remark}[section]




\begin{document}
\title{New characterization based exponentiality  tests for randomly censored data}

\author{Marija Cupari\'c\thanks{marijar@matf.bg.ac.rs},  Bojana Milo\v sevi\'c\thanks{bojana@matf.bg.ac.rs } 
         \\\medskip
{\small Faculty of Mathematics, University of Belgrade, Studenski trg 16, Belgrade, Serbia}}

\date{}

\maketitle
\begin{abstract}
Recently, the characterization based approach for the construction of goodness of fit tests has become popular. Most of the proposed tests have been designed for complete i.i.d. samples. 
Here we present the adaptation of the recently proposed exponentiality tests based on equidistribution-type characterizations for the case of randomly censored data. Their asymptotic properties are provided. Besides, we present the results of wide empirical power study including the powers of several recent competitors. This study can be used as a benchmark for future tests proposed for this kind of data.
\keywords{survival analysis\and U-empirical processes\and goodness-of-fit\and lifetime distributions \and bootstrap }
\end{abstract}

\section{Introduction}\label{intro}
The exponential distribution is one of the simplest lifetime distribution, and, as such,  the most used one. Hence, the construction of a battery of powerful exponentiality tests is of huge importance.   For the case of i.i.d. data there have been several extensive studies with this aim (see e.g. \cite{henze2005}, \cite{torabi2018wide}, \cite{jimenez2020exponentiality}).
Step forward is also made in adapting common exponentiality tests for testing distributional assumptions in some time-series model such as conditional duration models (see \cite{meintanis2017goodness}).  All mentioned studies have assumed the complete data case.

However, in survival analysis, our sample is often limited due to some censoring mechanism. Here, the outcome variable of interest is the time until an event occurs (survival time). Most of the times, this type of data is censored, that means
we have information about the survival time of the individual, but we don't know exactly
the survival time. 
This situation often happens in health studies where it is not possible to observe all individuals until the time of death or some individuals decide to drop off from the study. This is a typical example of right censoring. 
On the other hand, the limitations of measurement devices often lead to left censorship.  Sometimes, both situations are present. This is so-called "double censorship". In all three mentioned cases there are three possible censoring mechanisms: all sample elements greater (lower) than some constant value are not noted (type I censoring), the censoring levels are random variables (random censoring), a fixed number of greater (lower) sample elements are not noted (type II censoring). Some hybrid versions of previously mentioned mechanisms are also possible.  
Notice that type I censoring is a particular case of the random censorship when the censoring variable is degenerate, hence it is more natural to study later censorship scheme. Moreover, although type I naturally arise as a consequence of time limitation of a clinical study, the individuals enter the study in more or less random fashion after diagnosis of some disease of interest, so their survival times are randomly censored (see e.g. \cite{lawless2002statistical}).

 Due to all mentioned above, there is a need for studying the goodness-of-fit tests for this kind of data.   
Although among the first tests for randomly censored data have been proposed in \cite{barlow1969note}, \cite{koziol1976cramer}, 
until now this field of research is still not well explored. 
 In \cite{koziol1976cramer}, the Cram\'er-von Mises statistic is generalised to randomly censored data using the Kaplan-Meier product-limit estimator of the distribution function. This approach might be used for adaptation of all test statistics that are based on some difference between empirical and hypothesised distribution function, or are some functionals of distribution function (see e.g. \cite{strzalkowska2017goodness}).  The null distributions of such tests usually depend on censoring distribution and are not scale-location free. Hence,  some resampling procedures are typically required. 
 One possible approach to cope with this issue is to apply tests on pseudo-complete samples (see \cite{balakrishnan2015empirical}). The other method, possible for large sample sizes, when the limiting distribution of test statistic is normal, is to consistently estimate asymptotic variance and use quantiles of the standard normal distribution for determination of critical region.

The aim of this paper is twofold.  First,  to
 adapt recent exponentiality tests based on U-empirical Laplace transforms for the case of randomly censored data and explore their properties. For the adaptation, we use so-called "inverse probability censoring weights" approach (IPCW) (see \cite{robins1992recovery}), where each unit is weighted by the inverse of an estimate of the conditional
probability of having remained uncensored.
Second, since there is no comparative study known so far, we conduct the
extensive power study for a large number of commonly used alternatives, through which we demonstrate the sensitivity of tests to data incompleteness.  Moreover, the presented results can serve as a benchmark for future tests for randomly censored data.

The paper is organized as follows.
In Section \ref{sec: test stat} we introduce test statistics and derive their asymptotic null distributions under random censorship. Section \ref{sec: empirical} contains  empirical study and discussion.  All proofs are given in the Appendix.
\section{Test statistics and their asymptotic properties}\label{sec: test stat}
 Let ${X}'_1,...,{X}'_n$ be a random sample from a non-negative continuous distribution function $F(x)$.
 For  testing null hypothesis $H_0:F(x)=1-e^{-\frac{x}{\mu}},\;x>0,$ for some particular $\mu>0$,  several classes of test statistics based on V- or U- empirical Laplace transforms were proposed in \cite{MilosevicObradovicPapers}, \cite{cuparic2019new} and \cite{cuparic2018new}. 
 Here we consider appropriate versions via U-statistics approach, given below
\begin{equation}\label{integralna}
    J^\mathcal{I}_{n,a}=\int\limits_0^{\infty}\Big(\frac{1}{n}\sum_{i_1=1}^ne^{-tX'_{i_1}}-\frac{1}{\binom{n}{2}}\sum_{1\leq i_1<i_2\leq n}e^{-t \psi^\mathcal{I}(X'_{i_1},X'_{i_2})}\Big)e^{-at}dt,
\end{equation}
\begin{equation}\label{L2}
    M^{\mathcal{I}}_{n,a}=\int\limits_0^{\infty}\Big(\frac{1}{n}\sum_{i_1=1}^ne^{-tX'_{i_1}}-\frac{1}{\binom{n}{2}}\sum_{1\leq i_1<i_2\leq n}e^{-t \psi^\mathcal{I}(X'_{i_1},X'_{i_2})}\Big)^2e^{-at}dt,
\end{equation}
where 
$\mathcal{I}$ indicates the characterization  the test is based on ($\mathcal{P}$ for Puri-Rubin, and $\mathcal{D}$ for Desu characterization).
We highlight that, to be suitable for testing a composite hypothesis,  those tests in \cite{cuparic2018new} and \cite{cuparic2019new} were originally applied to the scaled sample.

It can be noticed that statistics of the form  \eqref{integralna} are $U-$statistic with kernel
\begin{equation}\label{jezgroPhi}
    \Phi^\mathcal{I}(x_1,x_2;a)=\frac{1}{2}\bigg(\frac{1}{a+x_1}+\frac{1}{a+x_2}-\frac{2}{a+\psi^\mathcal{I}(x_1,x_2)}\bigg),
\end{equation}
while the statistic \eqref{L2} could be expressed as 
\begin{equation*}
    M^{\mathcal{I}}_{n,a}=\int\limits_0^\infty U^2_n(t)e^{-at}dt,
\end{equation*}
where $\{U_n(t)\}$ is a $U$-empirical process   of degree two with kernel 
\begin{equation}\label{jezgroh}
    h^\mathcal{I}(x_1,x_2;t)=\frac{1}{2}\Big(e^{-tx_{1}}+e^{-tx_{2}}-2e^{-t \psi^\mathcal{I}(x_{1},x_{2})}\Big).
\end{equation}
 In the expressions above 
$\psi^\mathcal{P}(x,y)=|x-y|,\; \psi^\mathcal{D}(x,y)=2\min(x,y). $


In what follows, we modify those test statistics to be applicable in case of randomly censored data.
Let, again, $X'_1,...,X'_n$ be a random sample from a non-negative continuous distribution function $F(x)$.
Let $C_1,...,C_n$ be censoring random variables with absolutely continuous  distribution function $G$ defined on $\mathbb{R}^+$. In the context of right censored data, for every $i=1,...,n$, we observe $X_i=\min\{X'_i,C_i\}$ and $\delta_i=I(X'_i\leq C_i)$. We assume the independent censoring model. 
Following  \cite{datta2010inverse}, we employ IPCW approach to make suitable modifications of \eqref{integralna} and \eqref{L2}. In particular, for $I\in\{\mathcal{P},\mathcal{D}\}$, we consider statistics
\begin{equation}\label{integralnaC}
    J^\mathcal{I}_{c,a}=\frac{1}{\binom{n}{2}}\sum_{i<j}\frac{\Phi^\mathcal{I}(X_i,X_j;a)\delta_i\delta_j}{K_c(X_i-)K_c(X_j-)},
\end{equation}
\begin{equation}\label{L2C}
    M^\mathcal{I}_{c,a}=\int\limits_0^\infty (U^\mathcal{I}_c(t))^2e^{-at}dt=\int\limits_0^\infty\bigg(\frac{1}{\binom{n}{2}}\sum_{i<j}\frac{h^\mathcal{I}(X_i,X_j;t)\delta_i\delta_j}{K_c(X_i-)K_c(X_j-)}\bigg)^2e^{-at}dt,
\end{equation} 
where $\Phi^\mathcal{I}$ and $h^\mathcal{I}$ are given earlier, and $K_c(x-)=1-G(x-)=P\{C\geq x\}$ is the survival function of the censoring variable $C$ such that $K_c(x)>0$, for each $x$, with probability 1.   

If the survival distribution of censoring variable is unknown, which usually occurs in practice,  $K_c$ have to be replaced by its consistent estimator. Here, we use the Kaplan-Meier estimator, where the role of censored and failed observations are reversed. Then statistics \eqref{integralnaC} and \eqref{L2C} become 
\begin{equation}\label{integralnaC1}
    \widehat{J}^\mathcal{I}_{c,a}=\frac{1}{\binom{n}{2}}\sum_{i<j}\frac{\Phi^\mathcal{I}(X_i,X_j;a)\delta_i\delta_j}{\widehat{K}_c(X_i-)\widehat{K}_c(X_j-)},
\end{equation}
\begin{equation}\label{L2C1}
    \widehat{M}^\mathcal{I}_{c,a}\!=\!\int\limits_0^\infty (\widehat{U}^\mathcal{I}_c(t))^2e^{-at}dt=\int\limits_0^\infty\bigg(\frac{1}{\binom{n}{2}}\sum_{i<j}\frac{h^\mathcal{I}(X_i,X_j;t)\delta_i\delta_j}{\widehat{K}_c(X_i-)\widehat{K}_c(X_j-)}\bigg)^2e^{-at}dt,
\end{equation} 
where $\widehat{K}_c(x-)=\prod\limits_{X_i< x}\bigg(1-\frac{1-\delta_i}{\sum_{k=1}^nI\{X_k\geq X_i\}}\bigg)$.  Naturally, the large values of $|\widehat{J}^{\mathcal{I}}_{n,a}|$ and $\widehat{M}^{\mathcal{I}}_{n,a}$  indicate that the null hypothesis should be rejected.
\begin{remark}
 An alternative approach for the construction of test statistics based on the mention characterizations is to estimate underlying Laplace transforms with their U-empirical Kaplan-Meier estimators (see e.g. \cite{datta2010inverse}). However, this estimator will coincide with the IPCW estimator when the largest observation is not censored. Therefore, the asymptotic and small sample properties of such statistics will be quite similar to those of the proposed tests.   
\end{remark}
Hereafter we present limiting distributions of proposed statistics under the null hypothesis of exponentiality. For simplicity, in the rest of this section, the test statistics and kernels will be labelled without $\mathcal{I}$.

First, we introduce several counting processes. Let $N^c_i(t)\!=\!I(X_i\!\leq\! t,\delta_i\!=\!0)$ be the right-censoring  counting process  for the $i$th individual and $Y_i(t)=I(X_i\geq t)$ is appropriate "at-risk" process. Then
$M_i^c(t)=N^c_i(t)-\int_0^tY_id\Lambda_c(u)$ is the associated martingale defined with respect to the filtration $$\mathcal{F}_t=\sigma(N^c_i(s), Y_i(s), 0\leq s\leq t, i=1,...,n),$$ where $\Lambda_c$ is the cumulative censoring hazard rate. Then, $N^c(t)=\sum_{i=1}^nN^c_i(t)$ is the number of censored observations in the interval $[0,t]$ and $Y(t)=\sum_{i=1}^nY_i(t)$ is the
number of objects at risk just prior to $t$. 

The process $\{\sqrt{n}\widehat{U}_c(t)\}$ is a random element in the Fr\'echet space $C(\mathbb{R}^+)$ of continuous functions on $\mathbb{R}^+$, endowed with the metric 
\begin{equation*}
    d(x,y)=\sum\limits_{n=1}^\infty\frac{1}{2^n}\min\{1,\max\limits_{0\leq t\leq n}|x(t)-y(t)|\}.
\end{equation*} 
In the next theorem we present its limiting behaviour. 
\begin{theorem}\label{distribution}
Let $X'_1,...,X'_n$  be the sample of i.i.d. random variables with distribution function $F(x)=1-e^{-\frac{x}{\mu} }$, $x>0$, and $(X_1,\delta_1),....(X_n,\delta_n)$ the corresponding censored sample. Suppose that 
\begin{enumerate}
    \item[a)]
    $\int\limits_0^\infty \frac{1}{K_c(u-)}dF(u)<\infty,$
    \item[b)] $\int\limits_{0}^\infty\frac{1-F(u)}{K_c^2(u)}dG(u)<\infty$, 
    \item[c)]  $\int\limits_0^\infty \frac{u^2}{K_c(u-)}dF(u)<\infty,$
    \item[d)] $\int\limits_{0}^\infty\frac{u^2(1-F(u))}{K_c^2(u)}dG(u)<\infty$.
\end{enumerate}
Then,
$\{\sqrt{n}\widehat{U}_{c}(t)\}$ converges in $C(\mathbb{R}^+\!)$ to some centered Gaussian process $\{\eta(t)\}$ with covariance equal to
\begin{equation}\label{kovarijacija}
    cov(\eta(t_1),\eta(t_2))=4E(\zeta(t_1)\zeta(t_2))
    ,\; t_1,t_2\in \mathbb{R}^+,
\end{equation}
where ${\zeta}(t_1)\!=\!\frac{h_1(X_1;t_1)\delta_1}{{K}_c(X_1-)}\!+\!\int_0^{\infty}\!{\omega}(u;t_1)d{M}^c_1(u)$ and $h_1\!(x;t)\!=\!E(h(X'_1,X'_2;t)\!|X'_1\!=\!x\!)$ is the first projection of kernel $h$ given in \eqref{jezgroh},  and for  $u\geq 0$
   $$ \omega(u;t)=\frac{1}{P(X_1\geq u)}\int\limits_u^\infty h_1(x;t)dF(x).$$
\end{theorem}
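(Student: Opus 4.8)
The plan is to linearize $\sqrt{n}\,\widehat{U}_c(t)$ into an average of i.i.d.\ centered summands whose common value is $2\zeta(t)$, and then to invoke a functional central limit theorem in $C(\mathbb{R}^+)$. I would proceed in three stages: first treat the oracle statistic $U_c(t)$ in which the true $K_c$ replaces $\widehat{K}_c$, then account for the estimation of $K_c$ by its Kaplan--Meier version, and finally pass from the i.i.d.\ sum to the Gaussian limit.

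First I would observe that, for each fixed $t$, $U_c(t)$ is a genuine $U$-statistic of degree two with kernel $g(X_1,\delta_1,X_2,\delta_2;t)=h(X_1,X_2;t)\delta_1\delta_2/(K_c(X_1-)K_c(X_2-))$. The IPCW identity $E[f(X_1)\delta_1/K_c(X_1-)]=E[f(X_1')]$, valid under the independent-censoring model, together with the equidistribution characterization underlying $h$, shows that $E[g]=0$ under $H_0$ and that its first projection equals $g_1(x,\delta;t)=h_1(x;t)\delta/K_c(x-)$. Hoeffding's decomposition then gives $\sqrt{n}\,U_c(t)=\frac{2}{\sqrt{n}}\sum_{i=1}^n h_1(X_i;t)\delta_i/K_c(X_i-)+R_n(t)$, and I would show that the degenerate remainder $R_n$ is $o_P(1)$ uniformly in $t$. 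Condition a) guarantees that the leading summands have finite second moment, since $E[(h_1(X_1;t)\delta_1/K_c(X_1-))^2]=E[h_1(X_1';t)^2/K_c(X_1'-)]$ is controlled by the boundedness of $h_1$ and $\int_0^\infty K_c(u-)^{-1}dF(u)<\infty$.

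Second, and this is where the main effort lies, I would handle $\sqrt{n}(\widehat{U}_c(t)-U_c(t))$ by expanding the product $1/(\widehat{K}_c(X_i-)\widehat{K}_c(X_j-))$ about $1/(K_c(X_i-)K_c(X_j-))$, keeping the two linear terms and discarding the quadratic one. Using the martingale linearization of the product-limit estimator, $\widehat{K}_c(x-)/K_c(x-)-1=-\int_0^{x-}Y(u)^{-1}dM^c(u)+o_P(n^{-1/2})$, and replacing the inner empirical average $\frac{1}{n}\sum_j h(X_i,X_j;t)\delta_j/K_c(X_j-)$ by its limit $h_1(X_i;t)$, I would then interchange the finite sum with the stochastic integral and invoke the law of large numbers for $Y(u)/n\to P(X_1\ge u)$ and for $\frac{1}{n}\sum_i h_1(X_i;t)\delta_i I(X_i>u)/K_c(X_i-)\to\int_u^\infty h_1(x;t)dF(x)$. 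This yields $\sqrt{n}(\widehat{U}_c(t)-U_c(t))=\frac{2}{\sqrt{n}}\sum_{i=1}^n\int_0^\infty\omega(u;t)dM_i^c(u)+o_P(1)$ uniformly in $t$. Condition b) is precisely what makes the predictable variation $\int_0^\infty\omega(u;t)^2 P(X_1\ge u)d\Lambda_c(u)=\int_0^\infty(\int_u^\infty h_1\,dF)^2((1-F(u))K_c(u)^2)^{-1}dG(u)$ finite, so that these martingale-integral summands are square-integrable.

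Finally, adding the two expansions gives $\sqrt{n}\,\widehat{U}_c(t)=\frac{2}{\sqrt{n}}\sum_{i=1}^n\zeta_i(t)+o_P(1)$, where each $\zeta_i(t)$ depends on the single pair $(X_i,\delta_i)$ alone (since $M_i^c$ involves only the $i$-th counting and at-risk processes and the deterministic $\Lambda_c$), so the $\zeta_i(t)$ are i.i.d.\ and centered, with $\zeta_1=\zeta$. The finite-dimensional distributions converge to a centered Gaussian law by the multivariate Lindeberg central limit theorem, with limiting covariance $4E(\zeta(t_1)\zeta(t_2))$, matching \eqref{kovarijacija}. To upgrade this to weak convergence in $C(\mathbb{R}^+)$ under the stated Fréchet metric, it suffices to prove tightness on every compact $[0,T]$, which I would obtain from a moment bound on the increments $\zeta(t)-\zeta(s)$; differentiating $h$ in $t$ produces factors of $X$, so controlling these increments requires the second-moment analogues, which is exactly the purpose of conditions c) and d). The chief obstacle throughout is making the Kaplan--Meier linearization and the accompanying empirical-average replacements hold uniformly in $t\in\mathbb{R}^+$ rather than merely pointwise, and verifying that the discarded remainders are genuinely $o_P(1)$ in the supremum over compacts.
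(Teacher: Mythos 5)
Your proposal takes essentially the same route as the paper's proof: you split $\sqrt{n}\widehat{U}_c(t)$ into the oracle IPCW $U$-statistic (handled by Hoeffding projection, with first projection $h_1(x;t)\delta/K_c(x-)$ and condition a) giving square-integrability) plus the Kaplan--Meier correction (handled by the martingale linearization of $\widehat{K}_c$, which the paper routes through the Nelson--Aalen estimator --- an equivalent formulation --- yielding the $\int_0^\infty\omega(u;t)\,dM_i^c(u)$ term whose variance is finite by condition b)), and then prove finite-dimensional convergence by the CLT and tightness via a Kolmogorov-type second-moment bound on increments, where the mean value theorem applied to $h_1$ produces the linear-in-$x$ factors that conditions c) and d) are designed to control. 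This is precisely the paper's decomposition $\sqrt{n}\widehat{U}_c(t)=Z_n(t)+\sqrt{n}R_n(t)$ and its verification scheme, so there is nothing to add.
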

Covariance $cov(\eta(t_1),\eta(t_2))$ might be estimated with
\begin{align*}
    &\widehat{cov}(\eta(t_1),\eta(t_2))=\frac{4}{n}\sum\limits_{i=1}^n\widehat{\zeta}_i(t_1)\widehat{\zeta}_i(t_2)=\!\frac{4}{n}\!\sum\limits_{i=1}^n\!\Bigg(\!\frac{h_1(X_i;t_1)\delta_i}{\widehat{K}_c(X_i-)}\!+\!\int\limits_0^{\infty}\!\widehat{\omega}(u;t_1)d\widehat{M}^c_i(u)\!\Bigg)\!\Bigg(\!\frac{h_1(X_i;t_2)\delta_i}{\widehat{K}_c(X_i-)}\!+\!\int\limits_0^{\infty}\!\widehat{\omega}(u;t_2)d\widehat{M}^c_i(u)\!\Bigg)\!.
\end{align*}  
The suitable expression for $\int_0^{\infty}\widehat{\omega}(u;t)d\widehat{M}^c_i(u)$ is obtained from the following set of equalities
\begin{align*}
        \int\limits_0^{\infty}\widehat{\omega}(u;t_1)d\widehat{M}^c_i(u)&=\int\limits_0^{\infty}\widehat{\omega}(u;t_1)d{N}^c_i(u)-\int\limits_0^{\infty}\widehat{\omega}(u;t_1)Y_i(u)d\widehat{\Lambda}_c(u)
        \\&=\widehat{\omega}(X_i;t_1)(1-\delta_i)-\sum\limits_{j=1}^n\widehat{\omega}(X_j;t_1)I\{X_i\geq X_j\}\frac{1-\delta_j}{Y(X_j)},
\end{align*}
where $\widehat{\omega}(u;t_1)=
        \frac{1}{(1-F(u))\widehat{K}_c(u-)}\int_0^\infty h_1(x;t_1)I\{x>u\}dF(x)$
is consistent estimator of $\omega(u;t_1)$.
The last hold from $\widehat{K}_c(x)\!\stackrel{P}{\to}\!K_c(x)$ and $\max\limits_{i=1,...,n}\frac{K_c(X_i-)}{\widehat{K}_c(X_i-)}\!=\!O_p(1)$ (see \cite{zhou1991some}), and consequently
\begin{align}\label{postojanostKc}
    \frac{h_1(X_i;t)\delta_i}{\widehat{K}_c(X_i-)}=\frac{h_1(X_i;t)\delta_i}{{K}_c(X_i-)}+o_p(1).
\end{align}
In addition, since
$d\widehat{M}_i(u)=dN_i(u)+Y_i(u)d\widehat{\Lambda}(u)$
and $\widehat{\Lambda}(u)\stackrel{P}{\to}\Lambda(u)$ it holds $d\widehat{M}_i(u)\stackrel{P}{\to}dM_i(u)$. Hence, the consistency of $\widehat{cov}(\eta(t_1),\eta(t_2))$ is justified.\\

The limiting behaviour of \eqref{integralnaC1} and \eqref{L2C1} is given in the following theorems.
\begin{theorem}\label{raspodelaL2}
Under the conditions of Theorem \ref{distribution}, it follows that
\begin{equation*}
    n\int\limits_0^\infty \widehat{U}^2_c(t)e^{-at}dt\stackrel{D}{\to}\sum_{k=1}^\infty\upsilon_kW^2_k,
\end{equation*}
where $\upsilon_k, k=1,2,...,$ is the sequence of  eigenvalues of the integral operator $A$, defined for functions $q\in C(\mathbb{R}^+)$ for which $\int_{0}^{\infty}q(t)^2e^{-at}dt<\infty,$ by
\begin{align}\label{operator}
    Aq(t_1)=\int\limits_{0}^{\infty}cov(\eta(t_1),\eta(t_2))q(t_2)e^{-at_2}dt_2,
\end{align}
where $cov(\eta(t_1),\eta(t_2))$ given in \eqref{kovarijacija}, and $W_{k},\;k=1,2,...,$ are independent standard normal variables.
\end{theorem}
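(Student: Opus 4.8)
The plan is to deduce the result from the functional weak convergence already established in Theorem~\ref{distribution}, combined with a continuous mapping argument and the spectral decomposition of the limiting integral. Write $\Psi(x)=\int_0^\infty x(t)^2e^{-at}\,dt$ for the target functional, so that the statistic equals $\Psi(\sqrt{n}\,\widehat{U}_c)$. By Theorem~\ref{distribution}, $\sqrt{n}\,\widehat{U}_c$ converges weakly to $\eta$ in $C(\mathbb{R}^+)$ under the metric $d$ of uniform convergence on compacts. If $\Psi$ were $d$-continuous, the continuous mapping theorem would immediately give $\Psi(\sqrt{n}\,\widehat{U}_c)\stackrel{D}{\to}\Psi(\eta)=\int_0^\infty\eta(t)^2e^{-at}\,dt$; the issue is that $d$ only controls behaviour on bounded intervals, whereas $\Psi$ integrates over all of $\mathbb{R}^+$.

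I would circumvent this by a truncation (convergence-together) argument. For $T>0$ set $\Psi_T(x)=\int_0^T x(t)^2e^{-at}\,dt$. Each $\Psi_T$ depends only on the restriction of $x$ to $[0,T]$ and is $d$-continuous, so the continuous mapping theorem yields $\Psi_T(\sqrt{n}\,\widehat{U}_c)\stackrel{D}{\to}\Psi_T(\eta)$ for every fixed $T$. Moreover $\Psi_T(\eta)\uparrow\Psi(\eta)$ almost surely as $T\to\infty$, and $\Psi(\eta)<\infty$ almost surely because conditions (a)--(d) make the trace $\int_0^\infty cov(\eta(t),\eta(t))e^{-at}\,dt$ finite.

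The main obstacle, and the step on which I would spend most effort, is the uniform negligibility of the tail, namely
\[
\lim_{T\to\infty}\limsup_{n\to\infty}P\Big(\int_T^\infty n\,\widehat{U}_c^2(t)\,e^{-at}\,dt>\varepsilon\Big)=0
\]
for every $\varepsilon>0$. For this I would bound $E\big[n\,\widehat{U}_c^2(t)\big]$ uniformly in $n$ by a function integrable against $e^{-at}$ whose tail integral over $[T,\infty)$ vanishes as $T\to\infty$. Using the projection representation of the IPCW statistic via $\zeta(t)$ and $h_1$ from Theorem~\ref{distribution}, together with the moment conditions (a)--(d) which guarantee the relevant variances are finite, a Markov inequality controls the tail. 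Combining the truncated convergence, the tail estimate, and the monotone limit $\Psi_T(\eta)\uparrow\Psi(\eta)$ through Slutsky's theorem yields $\Psi(\sqrt{n}\,\widehat{U}_c)\stackrel{D}{\to}\Psi(\eta)$.

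Finally, to identify the limit with $\sum_k\upsilon_k W_k^2$, I would invoke the spectral theorem. The kernel $cov(\eta(t_1),\eta(t_2))$ is symmetric and positive semidefinite, and the finite trace renders the operator $A$ of \eqref{operator}, acting on $L^2(\mathbb{R}^+,e^{-at}dt)$, compact, self-adjoint and nuclear, so by Mercer's theorem it possesses eigenvalues $\upsilon_k\ge 0$ with $\sum_k\upsilon_k<\infty$ and an orthonormal eigenbasis $\{e_k\}$. Expanding $\eta$ in its Karhunen--Lo\`eve series, the coefficients $\int_0^\infty\eta(t)e_k(t)e^{-at}\,dt$ are independent centered Gaussian with variance $\upsilon_k$, hence equal $\sqrt{\upsilon_k}\,W_k$ with $W_k$ i.i.d.\ standard normal; Parseval's identity then gives $\int_0^\infty\eta(t)^2e^{-at}\,dt=\sum_k\upsilon_k W_k^2$, which is the asserted representation.
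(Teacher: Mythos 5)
Your overall architecture (truncate the integral, apply the continuous mapping theorem to $\Psi_T$ on $[0,T]$, prove tail negligibility, then finish with Mercer and Karhunen--Lo\`eve) is a legitimate and genuinely different route from the paper's. The paper never truncates: it proves $\int_0^\infty Z_n^2(t)e^{-at}dt\stackrel{D}{\to}\int_0^\infty\eta^2(t)e^{-at}dt$ directly for the i.i.d.-sum projection $Z_n$ of \eqref{Zn} by a Hilbert-space CLT argument (the citation of Henze and Wagner), and then transfers to $\sqrt{n}\widehat{U}_c$ via the triangle inequality using $\sqrt{n}\widehat{U}_c(t)-Z_n(t)=\sqrt{n}R_n(t)$. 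Your spectral end-game (Mercer decomposition of \eqref{kovarijacija}, Karhunen--Lo\`eve expansion, Parseval) is exactly the paper's, and your finiteness argument for $\Psi(\eta)$ via the trace matches the paper's Tonelli step.

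However, your tail step has a genuine gap as stated. You propose to control $P\big(\int_T^\infty n\widehat{U}_c^2(t)e^{-at}dt>\varepsilon\big)$ by Markov's inequality applied to $E\big[n\widehat{U}_c^2(t)\big]$. But $\widehat{U}_c$ carries the Kaplan--Meier estimator $\widehat{K}_c$ in its denominators, and no uniform-in-$n$ moment bound for such weighted U-statistics is available: all the control the paper (and the literature it cites) has on $\widehat{K}_c$ --- $\max_i K_c(X_i-)/\widehat{K}_c(X_i-)=O_p(1)$ from Zhou (1991), and the weak convergence results of Gill and Ying --- consists of statements \emph{in probability}, not in expectation. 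Indeed $1/\widehat{K}_c(X_{(n)}-)$ can be as large as order $n$ (e.g.\ when all observations but the largest are censored), so boundedness of $E\big[n\widehat{U}_c^2(t)\big]$ cannot be taken for granted and certainly does not follow from conditions (a)--(d). The repair is to run your Markov/Fubini estimate on $Z_n$ instead: $E\big[Z_n^2(t)\big]=4\,Var(\zeta(t))$ is bounded uniformly in $t$ and $n$ under (a)--(d), because the bound \eqref{EH2} and the variance bound for the martingale integral are free of $t$ (one only uses $|h(\cdot,\cdot;t)|\le 2$), whence $E\int_T^\infty Z_n^2(t)e^{-at}dt\le Ce^{-aT}/a$. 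You must then separately show that the remainder $\sqrt{n}R_n(t)=\sqrt{n}\widehat{U}_c(t)-Z_n(t)$ is negligible in $L^2(e^{-at}dt)$ over the \emph{whole} half-line, not merely on compacts as Theorem \ref{distribution} asserts; this does hold here, since in the proof of Theorem \ref{distribution} every remainder is bounded by $t$-free factors (such as $\sup|B_n|$ and $\sup_s\sqrt{n}|\widehat{\Lambda}_c(s)-\Lambda_c(s)|$) multiplied by kernel averages whose bounds are uniform in $t$ --- but it must be said explicitly. Note that this whole-line uniformity is precisely what the paper's own one-line triangle-inequality step silently requires as well; your plan has the merit of forcing this issue into the open, but as written it resolves it with a moment bound that is not justified.
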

\begin{theorem}\label{data}
Suppose that conditions of the Theorem \ref{distribution} holds. Then $\sqrt{n}\widehat{J}_{c,a}\overset{D}{\to}\mathcal{N}(0,\sigma^2),$ where 
\begin{equation*}
\sigma^2=4Var\bigg(\frac{\Phi_1(X_1)\delta_1}{K_c(X_1-)}+\int\limits_0^{\infty}\frac{1}{P(X_1\geq u)}\int\limits_u^\infty \Phi_1(x)dF(x)dM^c_1(u)\bigg),
\end{equation*}
and $\Phi_1(x)=E(\Phi(X'_1,X'_2;a)|X'_1=x)$ is the first projection of kernel $\Phi$ given in \eqref{jezgroPhi}. 
\end{theorem}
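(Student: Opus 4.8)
The plan is to recognize $\sqrt{n}\,\widehat{J}_{c,a}$ as a Laplace-type functional of the $U$-empirical process already analysed in Theorem \ref{distribution}, and then transport the asymptotics from there. First I would record the elementary identity
\[
\Phi(x_1,x_2;a)=\int_0^\infty h(x_1,x_2;t)\,e^{-at}\,dt,
\]
obtained by integrating \eqref{jezgroh} against $e^{-at}$ and using $\int_0^\infty e^{-ts}e^{-at}\,dt=1/(a+s)$, which reproduces \eqref{jezgroPhi}. Substituting this into \eqref{integralnaC1} and interchanging the finite double sum with the integral gives
\[
\widehat{J}_{c,a}=\int_0^\infty \widehat{U}_c(t)\,e^{-at}\,dt,
\]
so that $\sqrt{n}\,\widehat{J}_{c,a}$ is exactly the image of $\{\sqrt{n}\,\widehat{U}_c(t)\}$ under the linear functional $q\mapsto\int_0^\infty q(t)e^{-at}\,dt$.

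Next I would import the asymptotically linear (H\'ajek-type) representation that underlies Theorem \ref{distribution}, namely $\sqrt{n}\,\widehat{U}_c(t)=\frac{2}{\sqrt{n}}\sum_{i=1}^n\zeta_i(t)+r_n(t)$, where $\zeta_i(t)$ is the i.i.d., mean-zero summand $\frac{h_1(X_i;t)\delta_i}{K_c(X_i-)}+\int_0^\infty\omega(u;t)\,dM^c_i(u)$ and $r_n$ is a negligible remainder. Integrating this against $e^{-at}\,dt$ and applying Fubini (legitimate since $M^c_i$ has finite variation) yields
\[
\sqrt{n}\,\widehat{J}_{c,a}=\frac{2}{\sqrt{n}}\sum_{i=1}^n\zeta^\Phi_i+\int_0^\infty r_n(t)e^{-at}\,dt,\qquad \zeta^\Phi_i:=\int_0^\infty\zeta_i(t)e^{-at}\,dt.
\]
The computations $\int_0^\infty h_1(x;t)e^{-at}\,dt=\Phi_1(x)$ and $\int_0^\infty\omega(u;t)e^{-at}\,dt=\frac{1}{P(X_1\geq u)}\int_u^\infty\Phi_1(x)\,dF(x)$ identify $\zeta^\Phi_i$ with exactly the random variable appearing in the statement. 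Because the Puri--Rubin and Desu characterizations force $E\,\Phi(X'_1,X'_2;a)=0$ under $H_0$, the summands are centered and no additional centering is required, so the Lindeberg--L\'evy CLT gives $\frac{2}{\sqrt{n}}\sum_i\zeta^\Phi_i\overset{D}{\to}\mathcal{N}(0,4\,Var(\zeta^\Phi_1))=\mathcal{N}(0,\sigma^2)$, with finiteness of the variance guaranteed by conditions c) and d).

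The one genuinely delicate point, and what I expect to be the main obstacle, is showing $\int_0^\infty r_n(t)e^{-at}\,dt=o_p(1)$. In Theorem \ref{distribution} the remainder is controlled only uniformly on compacta, whereas here it is integrated over all of $\mathbb{R}^+$ against a factor $\sqrt{n}$, so a crude uniform bound fails. I would handle this by truncation at a level $T$: on $[0,T]$ the uniform-on-compacta control of $r_n$ together with the finite weight $\int_0^T e^{-at}\,dt$ gives negligibility, while for the tail $\int_T^\infty$ one exploits the decay of $h(X_i,X_j;t)$ as $t\to\infty$ and the weight $e^{-at}$, combined with the second-moment conditions c)--d), to bound $\int_T^\infty\sqrt{n}\,|\widehat{U}_c(t)|e^{-at}\,dt$ uniformly in $n$, letting $T\to\infty$ afterwards. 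The remaining ingredients, namely replacing $\widehat{K}_c$ by $K_c$ through $\max_i K_c(X_i-)/\widehat{K}_c(X_i-)=O_p(1)$ and \eqref{postojanostKc}, and the martingale representation of the effect of estimating $K_c$, are precisely those already established for Theorem \ref{distribution} and carry over verbatim after integration.
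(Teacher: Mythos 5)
Your proposal is correct, but it reaches the conclusion by a genuinely different route than the paper. The paper's proof is a one-line reduction: it re-runs Step \ref{korak1} of the proof of Theorem \ref{distribution} with the fixed kernel $\Phi(\cdot,\cdot;a)$ playing the role of $h(\cdot,\cdot;t)$, i.e.\ it performs the H\'ajek projection of the scalar IPCW $U$-statistic $\widehat{J}_{c,a}$ directly, treats the estimation of $K_c$ by the same Nelson--Aalen/martingale representation, arrives at $\sqrt{n}\widehat{J}_{c,a}=\frac{2}{\sqrt{n}}\sum_{i=1}^n\zeta^\Phi_i+o_p(1)$ with $\zeta^\Phi_i$ exactly the variable in the statement, and invokes the ordinary CLT --- no process machinery, no integration over $t$, hence no remainder-integration issue ever arises. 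You instead transport the representation underlying Theorem \ref{distribution} through the linear functional $q\mapsto\int_0^\infty q(t)e^{-at}\,dt$, via the identity $\Phi=\int_0^\infty h(\cdot,\cdot;t)e^{-at}\,dt$; your identifications $\int_0^\infty h_1(x;t)e^{-at}\,dt=\Phi_1(x)$ and $\int_0^\infty\omega(u;t)e^{-at}\,dt=\frac{1}{P(X_1\geq u)}\int_u^\infty\Phi_1(x)\,dF(x)$, the centering under $H_0$, and the constant $\sigma^2=4\,Var(\zeta^\Phi_1)$ are all right. What your route buys is economy (everything is recycled from work already done); its cost is the step you flag, $\int_0^\infty\sqrt{n}R_n(t)e^{-at}\,dt=o_p(1)$. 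Your truncation scheme would do the job, but it is more caution than necessary: the remainder bounds in the proof of Theorem \ref{distribution} are in fact uniform in $t$ over all of $[0,\infty)$, not merely on compacta, because $|h(x_1,x_2;t)|\leq 2$ uniformly in $t$ and $x_1,x_2$, so the bound \eqref{EH2}, the Cauchy--Schwarz bound on $\sqrt{n}R''_n$, and the bound on \eqref{razlika} never degrade as $t\to\infty$; one may therefore integrate them directly against the finite measure $e^{-at}\,dt$. Two further small points: finiteness of $Var(\zeta^\Phi_1)$ already follows from conditions a)--b), since $|\Phi|\leq 2/a$ makes $\Phi_1$ bounded; conditions c)--d) are needed in the paper only for tightness of the process, which neither the paper's scalar argument nor yours (which uses only the linear representation, not the full weak convergence in $C(\mathbb{R}^+)$) actually requires.
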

The consistent estimator of $\sigma^2$ can be obtained following the procedure given after the Theorem \ref{distribution}.

\begin{remark}
The assumptions a)-d) of the Theorem 1 are not too restrictive. For example, in the case of Koziol-Green model which assume that $K_c(x)=(1-F(x))^\beta,$ where $\beta>0$ is unknown parameter, the assumptions a)-d) are satisfied if the censoring rate $p=\frac{\beta}{\beta+1}$ is less than 0.5. 
\end{remark}
The results provided in this section can be used for testing exponentiality in the large sample case. In the next section, we propose a suitable resampling procedure in small and moderate sample size case and show its consistency.

\begin{theorem}\label{postojanost}
Let $X$ and $Y$ be independent and identically distributed positive absolutely continuous random variables and $\mathcal{L}_{\omega_1}(t)=E(e^{-t\omega_1(X,Y)})$ and $\mathcal{L}_{\omega_2}(t)=E(e^{-t\omega_2(X,Y)})$. Then we have
\begin{align*}
  \widehat{J}_{c,a}&\overset{P}{\to} \int\limits_{0}^{\infty}(\mathcal{L}_{\omega_1}(t)-\mathcal{L}_{\omega_1}(t))e^{-at}dt,\\ 
  \widehat{M}_{c,a}&\overset{P}{\to} \int\limits_{0}^{\infty}(\mathcal{L}_{\omega_1}(t)-\mathcal{L}_{\omega_1}(t))^2e^{-at}dt.
\end{align*}
\end{theorem}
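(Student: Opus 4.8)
The plan is to establish both limits by a common three-step scheme: remove the estimated weights $\widehat K_c$ in favour of the deterministic survival function $K_c$, apply the law of large numbers for $U$-statistics to the resulting fixed kernels, and identify the limiting constants through the IPCW unbiasedness identity. For the first step I would reproduce the argument used after Theorem~\ref{distribution}: because $\widehat K_c(x)\stackrel{P}{\to}K_c(x)$ and $\max_{i}K_c(X_i-)/\widehat K_c(X_i-)=O_p(1)$ (\cite{zhou1991some}), every factor $1/\widehat K_c(X_i-)$ can be replaced by $1/K_c(X_i-)$ up to an $o_p(1)$ error, whence $\widehat J_{c,a}=J_{c,a}+o_p(1)$ and $\widehat U_c(t)=U_c(t)+o_p(1)$, the latter uniformly enough in $t$ to survive integration in $\widehat M_{c,a}$.

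The engine of the proof is the identity $E\big[g(X_1)\delta_1/K_c(X_1-)\big]=E\big[g(X_1')\big]$, valid for integrable $g$ under the independent censoring model, since on $\{\delta_1=1\}$ we have $X_1=X_1'$ and $E[\delta_1\mid X_1']=P(C\ge X_1'\mid X_1')=K_c(X_1'-)$. Using it on each coordinate together with the independence of $(X_1,\delta_1)$ and $(X_2,\delta_2)$ gives
$$E\!\left[\frac{\Phi(X_1,X_2;a)\,\delta_1\delta_2}{K_c(X_1-)K_c(X_2-)}\right]=E\big[\Phi(X_1',X_2';a)\big],\qquad E\!\left[\frac{h(X_1,X_2;t)\,\delta_1\delta_2}{K_c(X_1-)K_c(X_2-)}\right]=E\big[h(X_1',X_2';t)\big].$$
The kernels $\Phi$ and $h$ are bounded, and the finiteness of $\int_0^\infty K_c(u-)^{-1}dF(u)$ (condition~a) of Theorem~\ref{distribution}) makes the weighted kernels integrable, so Hoeffding's law of large numbers for $U$-statistics yields $J_{c,a}\stackrel{P}{\to}E[\Phi(X_1',X_2';a)]$ and, for fixed $t$, $U_c(t)\stackrel{P}{\to}E[h(X_1',X_2';t)]$. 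Substituting $1/(a+x)=\int_0^\infty e^{-tx}e^{-at}dt$ into \eqref{jezgroPhi} and applying Tonelli turns $E[\Phi(X_1',X_2';a)]$ into $\int_0^\infty\big(E[e^{-tX'}]-E[e^{-t\psi(X_1',X_2')}]\big)e^{-at}dt=\int_0^\infty(\mathcal L_{\omega_1}(t)-\mathcal L_{\omega_2}(t))e^{-at}dt$, while directly $E[h(X_1',X_2';t)]=\mathcal L_{\omega_1}(t)-\mathcal L_{\omega_2}(t)$; this settles the claim for $\widehat J_{c,a}$.

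It remains to promote the pointwise convergence $U_c(t)\stackrel{P}{\to}\mathcal L_{\omega_1}(t)-\mathcal L_{\omega_2}(t)=:L(t)$ to convergence of $\widehat M_{c,a}=\int_0^\infty U_c^2(t)e^{-at}dt+o_p(1)$, and this is the main obstacle. The boundedness $|h|\le 2$ gives the uniform bound $|U_c(t)|\le 2S_n$ with $S_n=\binom{n}{2}^{-1}\sum_{i<j}\delta_i\delta_j/(K_c(X_i-)K_c(X_j-))$; here $S_n\stackrel{P}{\to}1$ (the IPCW identity with $g\equiv 1$) so that $\sup_t|U_c(t)|=O_p(1)$, and condition~a) also yields $\sup_n E[S_n^2]<\infty$, hence the $L^2$-boundedness and uniform integrability needed to upgrade the pointwise convergence to $L^1$. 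I would then split $\int_0^\infty|U_c^2-L^2|e^{-at}dt$ at a level $T$: the tail is dominated by $(4S_n^2+4)a^{-1}e^{-aT}$ and made negligible by taking $T$ large, while on $[0,T]$ the factorisation $U_c^2-L^2=(U_c-L)(U_c+L)$ combined with Fubini and the $L^1$ convergence of $U_c(t)$ forces the integral to zero in probability. The delicate point is exactly this interchange of limit and integral: since the dominating quantity $S_n$ is random, a plain dominated-convergence argument is unavailable, and the interchange must instead be routed through the uniform $L^2$ bound furnished by the integrability assumptions.
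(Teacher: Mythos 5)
Your proposal is correct and follows essentially the same route as the paper's proof: replace $\widehat{K}_c$ by $K_c$ using its consistency and the $O_p(1)$ ratio bound, invoke the law of large numbers for $U$-statistics together with the IPCW mean-preserving identity to get pointwise convergence of $\widehat{U}_c(t)$ (hence of $\widehat{J}_{c,a}$), and then pass the limit through the integral for $\widehat{M}_{c,a}$. The only divergence is in the last step, which the paper compresses into a one-line appeal to the continuous mapping theorem, while your truncation/uniform-integrability argument makes explicit (and in fact justifies more carefully) the tail control that this appeal leaves implicit.
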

As a corollary we have  that the test $\widehat{M}_{c,a}$ will be consistent against all fixed alternatives  for which  the expression under the integral sign is different from zero, on the set of non-negligible measure. In particular case of Puri-Rubin characterization it means that the test is globally consistent, while in the case of Desu characterization it might not always hold  since the test is based on one value of $m$  that appears in the characterization (see the original version of characterization in \cite{desu1971}).
The test $\widehat{J}_{c,a}$ will be consistent against all alternatives where the theoretical counterpart
of $\widehat{J}_{c,a}$  is not equal to zero, which includes all distributions of practical interest.

\section{Empirical study}\label{sec: empirical}

Here, we study the performance of proposed tests via the extensive power comparison. We consider the case of testing simple hypothesis as well as the composite one, in the general case when there is no information about censoring distribution. One of the goals of this comparison study is to investigate the sensitivity of power performance  to different censoring rates. Therefore
in order to achieve certain  censoring rate we  use Koziol-Green model introduced in \cite{koziol1976cramer}.
We consider following competitor test statistics:
\begin{itemize}
\item The Cramer-von Mises test proposed in \cite{koziol1976cramer} with test statistic
\begin{align}\label{cvm}
    \omega^2=\int\limits_0^\infty(\widetilde{F}_n(t)-(1-e^{-\frac{t}{\mu}}))^2e^{-\frac{t}{\mu}}dt,
\end{align}
where $\widetilde{F}_n$ is Kaplan-Meier estimator modified with $\widetilde{F}_n(t)=1,$ for $t\geq X_{(n)}$ if the largest observation is censored. 

\item The $\chi^2$ test proposed in \cite{akritas1988pearson}. In the case of simple hypothesis the test statistics is
\begin{align}\label{akritas}
A_{nr}=\sum\limits_{j=1}^r\frac{(N_{1j}-n\widehat{p}_{1j})^2}{n\widehat{p}_{1j}},
\end{align}
where $N_{1j}=\sum_{i=1}^nI\{X_i\in A_j,\delta_j=1\}$, $\widehat{p}_{1j}=\frac{1}{\mu}\int_{A_j}(1-\widehat{H}(x))dx$ and $\widehat{H}(x)=\frac{1}{n}\sum_{i=1}^nI\{X_i\leq x\}$ is edf. Domain of exponential distribution is divided into intervals $A_j, \;j=1,..,r,$ in such way that each of them has the same probability under $H_0.$
In the case of composite hypothesis, the estimators  of the cell probabilities, are $\widetilde{p}_{1j}=\frac{1}{\widehat{\mu}}\int_{A_j}(1-\widehat{H}(x))dx
,$ where $\widehat{\mu}$ is the Kaplan-Meier estimator of $\mu$. 
Intervals $A_j,\;j=1,..,r,$ are formed similarly as in the case of simple hypothesis. The test statistic in this case is 
\begin{align}\label{akritaSlozena}
A_{nr}=\widetilde{V}'_nA\widetilde{V}_n,
\end{align}
where $A$ is generalised inverse of the matrix $\widehat{\Sigma}-\widehat{B}\widehat{I}_{\widehat{\theta}}\widehat{B}'$ with \\ $\widehat{\Sigma}=diag(\widetilde{p}_{11},...,\widetilde{p}_{1r})$ and $\widehat{B}$ is vector with column elements $\widehat{b}_j=\int_{A_j}(1-\widehat{H}(x))dx$, and vector $\widetilde{V}_n=\frac{1}{\sqrt{n}}((N_{11}-n\widetilde{p}_{11}),...,(N_{1r}-n\widetilde{p}_{1r})).$

\item The test based on maximal correlations proposed in \cite{strzalkowska2017goodness} with test statistic 
\begin{align}\label{graneStat}
   {Q^S_n}\!=\!\frac{\sqrt{n}Q_n}{\sqrt{\!\sigma^2_n\!}}\!=\!\frac{\sqrt{n}}{\sqrt{\sigma^2_n}}\!\sum\limits_{i\neq j}\!\omega_{in}\omega_{jn}\!((6Y_i\!-\!2)I\{Y_j\leq Y_i\}\!-\!6Y_iI\{Y_j\!>\!Y_i\}\!),
\end{align}
where 
$Y_i=1-e^{-\frac{X_i}{\mu}}$, and $\omega_{in}=F_n(Y_i)-F_n(Y_i-),$ and $F_n$ is Kaplan-Meier estimator of distribution $F$,
and $\sigma^2_n$ is consistent estimator of the variance of statistic $Q_n$ (see \cite{strzalkowska2017goodness}). 
\item The test based on properties of  DMTTF class of  life distributions  proposed in \cite{kattumannil2019simple} with test statistic
\begin{align}\label{katumanilStat}
\Delta_n=\frac{1}{\binom{n}{2}}\sum\limits_{i<j}\frac{\delta_i\delta_j}{\widehat{K}_c(X_i)\widehat{K}_c(X_j)}\Big(2\min\{X_i,X_j\}-\frac{1}{2}(X_i+X_j)\Big).
\end{align}
\end{itemize}
It should be  noticed that the test statistics \eqref{katumanilStat} is also constructed using IPCW approach, while the other three tests are constructed using usual Kaplan-Meier approach. Also, what is interesting to note is that test \eqref{katumanilStat} might be viewed as a test based on Desu characterization constructed via the moment-based approach. Moreover it can be obtained from $M^{\mathcal{D}}_{c,a}$ and $M^{\mathcal{D}}_{c,a}$ when $a\to\infty.$

The test's power performance is examined against  a Weibull ($W(\theta)$), a Gamma ($\Gamma(\theta)$), a Half-normal ($HN$), a Chen ($CH(\theta)$), a Linear failure rate ($LF(\theta)$), a Modified extreme value ($EV(\theta)$), a  Log-normal ($LN$) and a  Dhillon ($DL(\theta)$) alternatives whose densities can be found in e.g. \cite{cuparic2018new}.
Notice that this set of alternatives, commonly used in complete data case, reflects different discrepancies from an exponential distribution. That gives us the credentials to get some general conclusions about the performance of considered tests. 

Having in mind that the null  distributions  of test statistics depend on censoring distribution,  in small and moderate sample sizes, the usage of resampling procedure is necessary. In what follows we  adapt the procedure proposed in \cite{efron1981censored} to our  null-distribution settings.
\subsection{Testing simple hypothesis of exponentiality} 
In this section we consider the case of testing hypothesis $H_0$ that the sample comes from exponential $\mathcal{E}(1)$ distribution.
 Empirical  powers are obtained using the following bootstrap procedure:
\begin{enumerate}
    \item Based on $(X_1,\delta_1),...,(X_n,\delta_n)$ compute the test statistic $T_n$ ;
    \item Estimate critical value $q^*_{n,1-\alpha}$: 
    \begin{enumerate}
       \item generate sample $C^*_1,...,C^*_n$ from Kaplan-Meier  distribution function $G_n$;
        \item generate new sample $X'_1,...,X'_n$ from $\mathcal{E}(1)$  distribution;
        \item using (a) and (b) construct bootstrap sample $(X^*_1,\delta^*_1),...,(X^*_n,\delta^*_n)$, where $X^*_i=\min({X}^{'}_i,C^*_i)$, and $\delta^*_i=I\{{X}^{'}_i\leq C^*_i\}$; 
        \item  based on the sample from (c) determine the value of test statistic $$T^*_n=T_n((X^*_1,\delta^*_1),...,(X^*_n,\delta^*_n));$$
        \item  repeat steps (a)-(d)  B times;
        \item based on the obtained sequence of bootstrapped tests statistics estimate critical value $q^*_{n,1-\frac{\alpha}{2}}$;
    \end{enumerate}
    \item Reject $H_0$ if $T_n\geq q^*_{n,1-\alpha}$;
    \item Repeat steps 1-3 $N$ times. Estimated  test power  is percentage of  rejected  $H_0$.
\end{enumerate}
\begin{remark}
This procedure 
assumes  that the critical region is  of the form $W=\{T_n>q_{n,1-\alpha}\}$ for suitably chosen constant $q_{n,1-\alpha}$. However, in case of two sided-tests  $Q_{n}$ and $Q_n^S$  algorithm is appropriately modified. 
\end{remark}

Let  $\{\sqrt{n}\widehat{U}_c(t)\}$ be  the  process considered in  Theorem \ref{distribution} and $\{\eta(t)\}$ its weak limit.
The next theorem gives us the asymptotic behaviour of the bootstrapped process $\{\sqrt{n}\widehat{U}^*_c(t)\}$ which 
justifies the usage of the proposed bootstrap procedure.
\begin{theorem}\label{teoremaBoot}
Assume that the conditions from Theorem \ref{distribution} are fulfilled. Then, conditionally on the sample $(X_1,\delta_1),...,(X_n,\delta_n)$, we have that the $\{\sqrt{n}\widehat{U}^*_c(t)\}$ converges weakly to process $\{\eta(t)\}$ in $C(\mathbf{R}^+)$. \end{theorem}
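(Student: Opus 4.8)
}
The plan is to mirror the proof of Theorem \ref{distribution} in the bootstrap world, exploiting the fact that, conditionally on $(X_1,\delta_1),\dots,(X_n,\delta_n)$, the bootstrap pairs $(X^*_i,\delta^*_i)$ are i.i.d. from a fully specified law $P_n$: the survival times $X'_i$ are drawn from the exact null $\mathcal{E}(1)$, while the censoring times $C^*_i$ are drawn from the Kaplan-Meier estimator $G_n$. In particular the null hypothesis holds exactly in the resampled world, and the only source of discrepancy from the population model is that $G_n$ replaces $G$. Since $G_n\to G$ uniformly with probability one (consistency of the Kaplan-Meier estimator), the bootstrap analogues of the integrals appearing in conditions a)--d) converge to their population counterparts, so those conditions continue to control the relevant bootstrap quantities asymptotically.

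First I would establish the H\'ajek-type decomposition of the bootstrap $U$-process. Writing $\widehat{K}^*_c$ for the Kaplan-Meier estimator of the censoring survival function computed from the bootstrap sample and $K^*_c$ for its conditional target (the survival function of $G_n$), the Hoeffding projection argument used for $\sqrt{n}\widehat{U}_c$ gives $\sqrt{n}\,\widehat{U}^*_c(t)=\frac{2}{\sqrt{n}}\sum_{i=1}^n\zeta^*_i(t)+R^*_n(t)$, where $\zeta^*_i(t)=\frac{h_1(X^*_i;t)\delta^*_i}{K^*_c(X^*_i-)}+\int_0^\infty\omega^*(u;t)\,dM^{c*}_i(u)$ is the bootstrap analogue of $\zeta_i$, $M^{c*}_i$ is the censoring martingale associated with the bootstrap filtration, and $R^*_n(t)\to 0$ in probability, conditionally on the sample and uniformly on compacts. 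The degenerate second-order remainder of the $U$-process is handled exactly as in Theorem \ref{distribution} once conditions a)--d) are transferred to $P_n$; the replacement of $K^*_c$ by $\widehat{K}^*_c$ is absorbed through the bootstrap version of \eqref{postojanostKc} and the martingale representation displayed after Theorem \ref{distribution}.

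Next I would verify convergence of the finite-dimensional distributions by a conditional central limit theorem. Because the summands $\zeta^*_i(t)$ are, conditionally on the data, i.i.d.\ with conditional mean zero, it suffices to check that their conditional covariance converges in probability to $cov(\eta(t_1),\eta(t_2))$ given in \eqref{kovarijacija}, together with a conditional Lindeberg condition. Covariance convergence is essentially the estimator consistency already argued after Theorem \ref{distribution}, now evaluated along the resampling law, and the Lindeberg condition follows from the uniform integrability granted by the second-moment conditions c)--d). This yields, in probability, conditional asymptotic normality of every finite vector $(\sqrt{n}\widehat{U}^*_c(t_1),\dots,\sqrt{n}\widehat{U}^*_c(t_k))$ with the limiting covariance \eqref{kovarijacija}.

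Finally, to upgrade the finite-dimensional convergence to process convergence in $C(\mathbb{R}^+)$ I would prove conditional tightness on each interval $[0,T]$, which suffices for the Fr\'echet metric $d$. A moment bound of the form $E^*\big(|\sqrt{n}\widehat{U}^*_c(t_1)-\sqrt{n}\widehat{U}^*_c(t_2)|^2\big)\le L\,|t_1-t_2|^2$, holding uniformly in $n$ with probability tending to one, gives asymptotic equicontinuity; here the Lipschitz-in-$t$ smoothness of $e^{-tx}$ together with the finite $u^2$-moments furnished by c)--d) provide the bound. Combining equicontinuity with the finite-dimensional limits, and expressing conditional weak convergence through the bounded-Lipschitz metric (so that $\sup_{f\in BL_1}|E^*f(\sqrt{n}\widehat{U}^*_c)-Ef(\eta)|\to 0$ in probability), delivers the claim. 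I expect the main obstacle to be the double layer of estimation: one must show that the in-bootstrap re-estimation $\widehat{K}^*_c$ is uniformly close to its conditional target $K^*_c$, and set up the conditional martingale central limit theorem for $\int_0^\infty\omega^*(u;t)\,dM^{c*}_i(u)$ with compensators built from the estimated hazard, all of this only \emph{in probability} along the sample sequence rather than almost surely.
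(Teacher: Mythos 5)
Your proposal follows essentially the same route as the paper's proof: a Hoeffding/H\'ajek decomposition of the bootstrap $U$-process into a conditionally i.i.d.\ linear part $Z^*_n(t)$ plus a remainder that is negligible uniformly on compacts (handled exactly as in Theorem \ref{distribution}), a conditional Lindeberg--Feller CLT for the finite-dimensional distributions with the conditional covariance shown to converge in probability to \eqref{kovarijacija}, and conditional tightness proved as for $\{Z_n(t)\}$. The only difference is one of bookkeeping: the paper carries out the mean-zero and Lindeberg verifications explicitly via the Robins--Rotnitzky identity $\delta^*_i/\widehat{K}_c(X^*_i)=1-\int_0^\infty dM^{c*}_i(u)/\widehat{K}_c(u)$, while you invoke the same moment conditions through uniform integrability, so the two arguments are materially identical.
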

 As a consequence, we have that the
null distributions of test statistics $\widehat{M}^{\mathcal{I}}_{c,a}$ and $\widehat{J}^{\mathcal{I}}_{c,a}$ can be approximated with proposed bootstrapped procedure.
The usage of this procedure for other considered test statistics can be justified analogously. Here we decided to keep up with the traditional application of \eqref{akritas} and therefore estimate its powers using asymptotic results.

In the  case of considered statistics with limiting normal distributions, for obtaining p-values, for larger sample sizes, 
one can also use their   standardized versions obtained using 
the  consistent estimator of asymptotic variance. This might significantly increase the computational performance of the testing procedure.

In our simulation study, 
we use described bootstrap approach,  with $B=1000$  and $N=1000$ replicates. It should be underlined that this procedure "keep" the censoring distribution which might have an impact on censoring rate of the bootstrapped sample obtained in step 2.(c).   The initial censoring level is controlled under assumed alternative distribution. 

The results, for the sample size $n=50$, the levels of censoring $\{0.1,0.2,0.3\}$ and the level of significance $\alpha=0.05$, are presented in Table \ref{tab: comparison50}. It can be noticed that, {for  all considered censoring rates almost all tests are well calibrated}. In particular, for  $p$ larger than 0.1 tests $Q_n$, $Q_n^S,$ $A_{n3}$ and $\Delta_n$ are the most liberal ones. However, for smaller $a$ $J^{\mathcal{I}}_{c,a}$ are better calibrated. Interestingly, for $M^{\mathcal{D}}_{c,a}$, and $M^{\mathcal{P}}_{c,1}$ the level of significance is kept even for $p=0.3$. 

As far the power performance is concerned, the general conclusion is that in most cases power decreases with the increase of initial censoring level, while in some cases the censoring level doesn't have significant impact on test power. Also, the ordering of tests shown to be not sensitive to the change of censoring rate.
As might be expected, no test outperforms all competitors for all selected alternatives. Although the test $A_{n3}$ is the optimal choice in the majority of cases,  new tests, for some choices of $a$, significantly outperform it in case of testing against Weibull and $LN(0.8)$ alternatives. 
It can also be noticed that the powers of new tests are usually not much affected by choice of $a$. However, some differences can be seen in the case of decreasing-increasing failure rate alternatives such as CH(0.5). Similar effect of tuning parameter can be noticed in the case of $\Gamma$(0.4) alternative. In those cases, the impact of the characterization, as well as the construction method, is notable. 
Having in mind the power study results and  the calibration of the tests, we believe that  $J^{\mathcal{D}}_{c,1},J^{\mathcal{P}}_{c,1},M^{\mathcal{P}}_{c,1}$, $M^{\mathcal{D}}_{c,1}$, and $A_{n3}$  deserve to be included in a battery of exponentiality tests for censored data. 
\begin{table}[]
\centering
	\scriptsize
	\caption{Percentage of rejected hypotheses for  $n=50$ for the simple hypothesis}
   \resizebox{1\textwidth}{!}{
		\begin{tabular}{cccccccccccccccccccc}
				\rotatebox[origin=c]{90}{p} &\rotatebox[origin=c]{90}{Alt.} & \rotatebox[origin=c]{90}{$Exp(1)$}&    \rotatebox[origin=c]{90}{$W(1.4)$}  & \rotatebox[origin=c]{90}{$\Gamma(2)$} & \rotatebox[origin=c]{90}{$HN$} & 
				\rotatebox[origin=c]{90}{$CH(0.5)$} & \rotatebox[origin=c]{90}{$CH(1)$} & \rotatebox[origin=c]{90}{$CH(1.5)$} & \rotatebox[origin=c]{90}{$LF(2)$} & \rotatebox[origin=c]{90}{$LF(4)$} &  \rotatebox[origin=c]{90}{$EV(1.5)$}& \rotatebox[origin=c]{90}{$LN(0.8)$} & \rotatebox[origin=c]{90}{$LN(1.5)$} & \rotatebox[origin=c]{90}{$DL(1)$} & \rotatebox[origin=c]{90}{$DL(1.5)$} & \rotatebox[origin=c]{90}{$W(0.8)$}  & \rotatebox[origin=c]{90}{$\Gamma(0.4)$}\\\hline

	\multirow{17}{*}{0.1} &	$\widehat{J}^{\mathcal{P}}_{c,1}$ & 5   & 74 &  96 & 39 & 16 & 9 & 97 & 45 & 58 & 75 & 85 & 67 & 76 & 100 & 37 & 80 \\ 
	&	$\widehat{J}^{\mathcal{P}}_{c,2}$ & 5 & 73 & 97 & 39 &  3 & 4 & 92 & 35 & 43 & 75 & 80 & 84 & 75 & 100 & 38 & 59\\ 
		&$\widehat{J}^{\mathcal{P}}_{c,5}$ &  4 & 70 & 97 & 35 &  1 & 2 & 79 & 23 & 25 & 71 & 72 & 93 & 75 & 100 & 40 & 37\\
		
		&$\widehat{J}^{\mathcal{D}}_{c,1}$ & 5 & 71 & 93 & 29& 52 &11 & 96 & 40 & 58 & 60 & 96 & 34 & 79 & 100 & 39 & 96\\
	    &$\widehat{J}^{\mathcal{D}}_{c,2}$ & 5& 72 & 97 & 33 & 18& 6 & 94 & 36 & 48 & 66 & 92 & 65 & 82 & 100 & 40 & 82\\
	    &$\widehat{J}^{\mathcal{D}}_{c,5}$ & 4 & 71 & 98 & 35 & 3 & 3 & 84 & 26 & 30 & 69 & 83 & 88 & 81 & 100 & 42 & 54\\
	   
	    &$\widehat{M}^{\mathcal{P}}_{c,1}$ & 5 & 74 & 96 & 39  & 26 & 11 & 97 & 47 & 61 & 75 & 89 & 65 & 76 & 100 & 36 & 85\\
	    &$\widehat{M}^{\mathcal{P}}_{c,2}$ & 4 & 74 & 97 & 39  & 6 & 6 & 95 & 40 & 49 & 75  & 83 & 80 & 75 & 100 & 38 & 68\\
	    &$\widehat{M}^{\mathcal{P}}_{c,5}$ & 4 & 71 & 98& 37  & 1 & 3 & 85 & 28 & 31 & 73 & 76 & 91 & 75 & 100 & 39 & 42\\
	    
	    &$\widehat{M}^{\mathcal{D}}_{c,1}$ & 5 & 69  & 92& 29 & 63 & 12 & 96 & 39 & 58 & 59 & 97 & 42 & 78 & 100 & 38 & 97\\
	   &$\widehat{M}^{\mathcal{D}}_{c,2}$ & 5&  71 & 97 & 32  & 28 & 8 & 95 & 37 & 53 & 65 & 95 & 67 & 82 & 100 & 40 & 88\\
	   &$\widehat{M}^{\mathcal{D}}_{c,5}$ & 4& 71 & 98 & 34  & 4 & 4 & 89 & 30 & 36 & 68 & 88 & 86 & 82 & 100 & 42 & 65\\
	   
	    &$\omega^2$ & 5 & 15 & 100 & 8  &100  & 100 & 100 & 98 & 100 & 15 & 95 & 70 & 100 & 100 & 13 & 100\\
	
	    &$A_{n3}$ & 6 & 53 & 100 &  55  &100 & 100& 100 & 100 & 100 & 92 & 81 & 98 & 98 & 100 & 29 & 100\\
	
	    &$Q_n$ & 5 & 65 & 96 & 42 &  99  & 100 & 100 & 97 & 100 & 79 & 81 & 42 & 68 & 99 & 35 & 51\\ 
	    &$Q^S_n$ & 5 & 68 & 88 & 47  & 99  & 100 & 100 & 99 & 100 & 84 & 87 & 39 & 63 &  97 & 28 & 33\\
	  
	    &$\Delta_n$ &  5 & 57 & 98 & 24  & 0  & 1 & 41 & 10 & 8 & 58 & 65 & 96 & 74 & 99 & 43 & 17\\
 	    \hline
	    
	    \multirow{17}{*}{0.2} & $\widehat{J}^{\mathcal{P}}_{c,1}$ & 5 & 69 & 96 & 34  & 6 & 10 & 93 & 39 & 51 & 71 & 85 & 58 & 71 & 99 & 31 & 61\\
		&$\widehat{J}^{\mathcal{P}}_{c,2}$ & 5 & 65 & 97 & 33 & 2 & 5 & 85 & 31 & 34 & 68 & 80 & 74 & 71 & 99 & 31 & 38\\
		&$\widehat{J}^{\mathcal{P}}_{c,5}$ & 5 & 60 & 97 & 28 & 1 & 5 & 67 & 20 & 21 & 62 & 72 & 85 & 70 & 99 & 32 & 20\\
		
		&$\widehat{J}^{\mathcal{D}}_{c,1}$ & 5 & 67 & 93 & 26 & 40 & 11 & 95 & 38 & 53 & 56 & 95 & 33 & 76 & 99 & 36 & 87\\
	    &$\widehat{J}^{\mathcal{D}}_{c,2}$ & 5 & 68 & 97 & 29 & 9 & 8 & 90 & 33 & 42 & 61 & 92 & 56 & 78 & 100 & 36 & 67\\
	    &$\widehat{J}^{\mathcal{D}}_{c,5}$ & 5 & 63 & 98 & 28 & 2 & 5 & 74 & 22 & 25 & 60 & 84 & 79 & 76 & 99 & 35 & 36\\
	    
	    &$\widehat{M}^{\mathcal{P}}_{c,1}$ & 5 & 69 & 96 & 34 & 12 & 12 & 94 & 41 & 55 & 71 & 88 & 56 & 70 & 99 & 31 & 69\\
	    &$\widehat{M}^{\mathcal{P}}_{c,2}$ & 5 & 67 & 97 & 33 & 3 & 7 & 90 & 34 & 42 & 70 & 83 & 70 & 71 & 99 & 31 & 48\\
	    &$\widehat{M}^{\mathcal{P}}_{c,5}$ & 5 & 61 & 97 & 30 & 1 & 5 & 74 & 23 & 26 & 64 & 75 & 82 & 71 & 99 & 32 & 26\\
	    
	    &$\widehat{M}^{\mathcal{D}}_{c,1}$ & 5 & 67 & 92 & 26 & 54 & 13 & 95 & 38 & 55 & 55 & 96 & 39 & 75 & 99 & 36 & 92\\
	    &$\widehat{M}^{\mathcal{D}}_{c,2}$ & 5 & 67 & 96 & 28 & 18 & 9 & 91 & 35 & 46 & 61 & 94 & 58 & 78 & 100 & 36 & 76\\
	    &$\widehat{M}^{\mathcal{D}}_{c,5}$ & 5 & 65 & 98 & 28 & 3 & 6 & 79 & 26 & 30 & 61 & 87 & 77 & 77 & 99 & 35 & 46\\
	    
	     &$\omega^2$ & 6 & 13 & 100 & 8 & 100 & 100 & 100 & 93 & 100 & 14 & 93 & 69 & 100 & 100 & 12 & 100\\
	   
	   &$A_{n3}$ & 7  & 42 & 100 & 47 & 100 & 100 & 100 & 98 & 100 & 77 & 79 & 96 & 97 & 100 & 21 & 100\\
	   
	    & $Q_n$ & 5 & 55 & 94 & 35 & 98 & 100 & 100 & 92 & 100 & 68 & 78 & 41 & 61 & 98 & 32 & 46\\
	    &$Q^S_n$ & 5 & 57 & 77 & 40 & 98 & 100 & 100 & 97 & 100 & 74 & 77 & 37 & 50 & 89 & 25 & 37\\
	 
	    &$\Delta_n$ & 4 & 47 & 98 & 20 & 0 & 4 & 33 & 11 & 14 & 47 & 65 & 89 & 73 & 98 & 32 & 9\\
	    \hline
	    
	    \multirow{17}{*}{0.3} & $\widehat{J}^{\mathcal{P}}_{c,1}$ & 5 & 62 &94 & 28  & 3 & 9 & 86 & 32& 45 & 59 & 80 & 46 & 71 & 99 & 25 & 37 \\
		&$\widehat{J}^{\mathcal{P}}_{c,2}$ & 5 & 59 & 95 & 25 & 2 & 6 & 74 & 26 & 32 & 55 & 75 & 62 & 70 & 99 & 25 & 19\\
		&$\widehat{J}^{\mathcal{P}}_{c,5}$ & 5 & 53 & 96 & 22 & 1 & 6 & 55 & 19 & 22 & 47 & 69 & 73 & 71 & 98 & 24 & 8\\
		
		&$\widehat{J}^{\mathcal{D}}_{c,1}$ & 5 & 65 & 93 & 27 & 26 & 12 & 91 & 35 & 51 & 49 & 94& 25 & 75 & 99 & 31 & 75\\
	    &$\widehat{J}^{\mathcal{D}}_{c,2}$ & 5 & 64 & 96 & 26 & 6 & 9 & 82 & 28 & 39 & 52 & 89 & 46 & 77 & 99 & 30 & 46\\
	    &$\widehat{J}^{\mathcal{D}}_{c,5}$ & 5 & 58 & 97 & 23 & 2 & 7 & 61 & 22 & 26 & 48 &80 & 66 & 76 & 99 & 27 & 17\\
	    
	    &$\widehat{M}^{\mathcal{P}}_{c,1}$ & 5 & 63 & 94 & 29 & 6 & 10 & 88 & 34 & 49 & 60 & 82& 45 & 71 & 99 & 25 & 46\\
	    &$\widehat{M}^{\mathcal{P}}_{c,2}$ & 5 & 59 & 95 & 26 & 2 & 7 & 80 & 29 & 37 & 57 & 77& 58 & 71 & 99 & 25 & 25\\
	    &$\widehat{M}^{\mathcal{P}}_{c,5}$ & 5 & 56 & 96 & 23 & 1 & 6 & 62 & 21 & 25 & 50 & 71& 70 & 71 & 99 & 24 & 10\\
	    
	    &$\widehat{M}^{\mathcal{D}}_{c,1}$ & 5 & 64 & 92 & 27 & 37 & 13 & 92 & 35 & 51 & 49 & 95& 30 & 74 & 99 & 32 & 82\\
	    &$\widehat{M}^{\mathcal{D}}_{c,2}$ & 5 & 64 & 96 & 26 & 11 & 10 & 84 & 30 & 42 & 52 & 92 & 47 & 77 & 99 & 30 & 57\\
	 	 &$\widehat{M}^{\mathcal{D}}_{c,5}$ & 5 & 60 & 97 & 24 & 3 & 7 & 68 & 24 & 29 & 49 & 84 & 65 & 77 & 99 & 29 & 25\\
	     
	      &$\omega^2$ & 4 & 13 & 100 & 7 & 100 & 100 & 98 & 82 & 98 & 11 & 92 & 67 & 100 & 100 & 11 & 100\\
	   
	   	      &$A_{n 3}$ & 7 & 38 & 100 & 38 & 100 & 100 & 99 & 91 & 100 & 59 & 80& 93 & 97 & 100 & 19 & 100\\
	   
	   & $Q_n$ & 5 & 48 & 92 & 28 & 96 & 98 & 100 & 82 & 97 & 50 & 66& 30 & 54 & 95 & 26 & 48\\
	    &$Q_n^S$ & 4 & 46 & 52 & 33 & 98 & 100 & 100 & 92 & 100 & 55 & 55& 29 & 32 & 68 & 17 & 49\\
	  
	    &$\Delta_n$ &4 & 41 & 97 & 16 & 1 & 6 & 33 & 13 & 18 & 32 & 65& 77 & 75 & 98 & 21 & 3\\
	    
	  \end{tabular}
	}
	\label{tab: comparison50}
\end{table}

\subsection{Testing composite hypothesis of exponentiality}
In this section we present results for testing $H_0:\;X\sim\mathcal{E}(\frac{1}{\mu})$ for arbitrary $\mu>0.$
It should be noted that most of the considered statistics can be made asymptotically scale-free, via application to the scaled sample.
However, since the distributions of test statistics also depend on the censoring distribution, which should be kept fixed in our simulation experiment, and we are dealing with samples of moderate size, this approach is not applicable.
Therefore, for the power estimation, we use the bootstrap algorithm analogous to the one used in simple hypothesis case, with the only difference in step 2. b.
Instead of it, we generate a new sample from $\mathcal{E}\big(\frac{1}{\widehat{\mu}}\big)$, where
$\widehat{\mu}=\frac{\sum_{i=1}^nX_i}{\sum_{i=1}^n\delta_i}$  
 is MLE of $\mu$ (see e.g. \cite{akritas1988pearson}).
The consistency of $\widehat{\mu}$ enables us to prove the consistency of bootstrap procedure using similar arguments like in the proof of the Theorem \ref{teoremaBoot}. Hence the proof is omitted. 

The results are presented in Table \ref{tab: comparison50slozena}.
Like in the case of testing simple hypothesis, most of considered tests are well calibrated. The only exception here is $\omega^2$ test whose size, for the censoring rate $p=0.3$, is slightly bellow the nominal level of significance. 

Given the fact that almost all tests are well-calibrated, we can recommend tests based on their power performance.
 The most powerful, with the exception of $LN(1.5)$ alternative, are proposed tests based on U-empirical Laplace transforms. In this particular case, $\omega^2$ and $\Delta_n$ are the best performing ones, while the power of proposed tests are the increasing functions of $a$.  Taking into account that, in non-censored case, $2\Delta_n^2$  is the limiting test statistic of $a^3M^{\mathcal{D}}_a$ when $a$ tends to infinity (the proof is analogous to the proof of the Theorem 4.1 from \cite{cuparic2018new}), we may expect that new tests and $\Delta_n$ will have the similar power performance against $LN(1.5)$ with larger value of $a$.
  
 In the case of the small censoring rate $p=0.1$ new tests based on Puri-Rubin characterization are in general more powerful than those based on Desu characterization. When $p=0.3$ tests based on Desu characterization take the lead position. 
In case that the practitioner doesn't have any preknowledge about the data, when $p=0.1$ we recommend $J^{\mathcal{P}}_{c,1},M^{\mathcal{P}}_{c,2}$,  for $p=0.2$ we recommend $J^{\mathcal{P}}_{c,1},M^{\mathcal{D}}_{c,1}$, while for larger censoring rates tests  $J^{\mathcal{D}}_{c,1}$ and $M^{\mathcal{D}}_{c,1}$ are the optimal choice.
\begin{table}[]
	\centering
	\scriptsize
	\caption{Percentage of rejected hypotheses for  $n=50$ for the composite hypothesis}
 \resizebox{1\textwidth}{!}{
		\begin{tabular}{ccccccccccccccccccc}
				\rotatebox[origin=c]{90}{p} &\rotatebox[origin=c]{90}{Alt.} & \rotatebox[origin=c]{90}{$Exp(1)$}&    \rotatebox[origin=c]{90}{$W(1.4)$}  & \rotatebox[origin=c]{90}{$\Gamma(2)$} & \rotatebox[origin=c]{90}{$HN$} & \rotatebox[origin=c]{90}{$CH(0.5)$} & \rotatebox[origin=c]{90}{$CH(1)$} & \rotatebox[origin=c]{90}{$CH(1.5)$} & \rotatebox[origin=c]{90}{$LF(2)$} & \rotatebox[origin=c]{90}{$LF(4)$} &  \rotatebox[origin=c]{90}{$EV(1.5)$}& \rotatebox[origin=c]{90}{$LN(0.8)$} & \rotatebox[origin=c]{90}{$LN(1.5)$} & \rotatebox[origin=c]{90}{$DL(1)$} & \rotatebox[origin=c]{90}{$DL(1.5)$} & \rotatebox[origin=c]{90}{$W(0.8)$}  & \rotatebox[origin=c]{90}{$\Gamma(0.4)$}\\\hline

	\multirow{17}{*}{0.1}& $\widehat{J}^{\mathcal{P}}_{c,1}$ & 5 & 77 & 94 & 45 & 91 & 30& 100 & 64 & 83 & 79 & 82 & 56 & 66 & 99 & 40 & 98\\ 
	&	$\widehat{J}^{\mathcal{P}}_{c,2}$ & 5 & 76 & 93 & 48 & 90 & 30 & 100 & 66 & 82 & 81 & 73 & 66 & 60 &99 & 40 & 97\\ 
		&$\widehat{J}^{\mathcal{P}}_{c,5}$ & 4 & 73 & 90 & 47 & 88 & 29 & 100 & 65 & 81 & 81 & 63 & 78 & 50 &98 & 40 & 97\\
		
		&$\widehat{J}^{\mathcal{D}}_{c,1}$ & 4 & 72 & 92 & 30 & 94 & 25 & 99 & 52 & 77 & 63 & 93 & 35 & 73 & 99& 41 & 100\\
	    &$\widehat{J}^{\mathcal{D}}_{c,2}$ & 5 & 75 & 94 & 36 & 92 & 27 & 100 & 59 & 82 & 72& 88 & 54 & 70 & 99& 42 & 99\\
	    &$\widehat{J}^{\mathcal{D}}_{c,5}$ & 4 & 74 & 93 & 44 & 90 & 27& 100 & 62 & 81 & 78 & 73 & 72 & 61 & 99& 43 & 98\\
	    
	    &$\widehat{M}^{\mathcal{P}}_{c,1}$ & 5 & 77 & 93 & 44 & 92 & 29 & 100 & 63 & 83 & 78& 86 & 56 & 68 & 99& 39 & 99\\
	    &$\widehat{M}^{\mathcal{P}}_{c,2}$ & 5 & 77 & 93 & 48 & 91 & 30 & 100 & 66 & 83 & 81 & 77& 64 & 63 & 99& 41 & 98\\
	    &$\widehat{M}^{\mathcal{P}}_{c,5}$ & 4 & 74 & 91 & 47 & 89 & 29 & 100 & 66 & 82& 82 & 66 & 74 & 53 & 98& 39 & 97\\
	    
	    &$\widehat{M}^{\mathcal{D}}_{c,1}$ & 4& 71 & 92 & 29 & 96 & 23 & 99 & 50 & 76 & 62 & 95 & 42 & 72 & 99& 40 & 100\\
	    &$\widehat{M}^{\mathcal{D}}_{c,2}$ & 5 & 74 & 93 & 35 & 93 & 26 & 100 & 58 & 81 & 71 & 90 & 54 & 70 & 99& 41 & 99\\
	    &$\widehat{M}^{\mathcal{D}}_{c,5}$ & 5 & 76 & 93 & 42 & 91 & 27 & 100 & 61 & 81 & 77 & 79& 69 & 64 & 99& 42 & 98\\
	    
	     &$\omega^2$ & 5 & 69 & 87 & 40 & 93 & 25 & 99 & 58 & 79 & 77 & 76 & 87 & 51 & 97& 39 & 98 \\
	    
	    &$A_{n3}$ & 5 & 61 & 74 & 41 & 83 & 25 & 97 & 53 & 73 & 74 & 53 & 82 & 39 & 88& 30 & 93\\ 
	     
	     &$Q_n$ & 5 & 68 & 91 & 26 & 89 & 15 & 97 & 44 & 67 & 58 & 85 &20  & 64 &98 & 32 & 95\\ 

	    &$\Delta_n$ & 4 & 66 & 81 & 42 & 87 & 25 & 99 & 60 & 79 & 79 & 43 & 87& 35 &91 & 40 & 96\\
	    \hline
	    
	    \multirow{17}{*}{0.2} & $\widehat{J}^{\mathcal{P}}_{c,1}$ & 5 & 69 & 89 & 39 & 84 & 26 & 99 & 52 & 73 & 73 & 80 & 44 & 64 & 97 & 32 & 92\\
		&$\widehat{J}^{\mathcal{P}}_{c,2}$ & 5 & 66 & 88 & 39 & 82 & 25 & 98 & 52 & 71 & 75 & 70 & 53 & 57 & 96 & 31 & 90\\
		&$\widehat{J}^{\mathcal{P}}_{c,5}$ & 5 & 62 & 83 & 35 & 80 & 23 & 96 & 49 & 67 & 73 & 57 & 64 & 49 & 93 & 30 & 88\\
		
		&$\widehat{J}^{\mathcal{D}}_{c,1}$ & 6 & 68 & 90 & 30 & 90 & 22 & 98 & 46 & 67 & 57 & 94 & 28 & 74 & 98 & 36 & 96\\
	    &$\widehat{J}^{\mathcal{D}}_{c,2}$ & 6 & 69 & 91 & 35 & 87 & 23 & 98 & 49 & 69 & 66 &87 & 43 & 70 & 98 & 36 & 94\\
	    &$\widehat{J}^{\mathcal{D}}_{c,5}$ & 5 & 64 & 89 & 36 & 82 & 22 & 97 & 48 & 67 & 70 & 71 & 60 & 59 & 97 & 33 & 91\\
	    
	    &$\widehat{M}^{\mathcal{P}}_{c,1}$ & 5 & 69 & 89 & 38 & 86 & 26 & 99 & 52 & 73 & 72 & 82 & 44 & 64 & 97 & 32 & 93\\
	    &$\widehat{M}^{\mathcal{P}}_{c,2}$ & 5 & 67 & 88 & 39 & 83 & 26 & 98 & 53 & 72 & 74 & 73 & 51 & 59 & 96 & 32 & 91\\
	    &$\widehat{M}^{\mathcal{P}}_{c,5}$ & 5 & 63 & 84 & 37 & 81 & 24 & 97 & 50 & 69 & 74 & 61 & 62 & 51 & 94 & 30 & 89\\
	    
	    &$\widehat{M}^{\mathcal{D}}_{c,1}$ & 6 & 67 & 89 & 29 & 91 & 21 & 98 & 45 & 66 & 56 & 95 & 33 & 73 & 98 & 37 & 97\\
	    &$\widehat{M}^{\mathcal{D}}_{c,2}$ & 6 & 68 & 91 & 34 & 88 & 22 & 98 & 48 & 69 & 64 & 90 & 44 & 71 & 98 & 36 & 95\\
	    &$\widehat{M}^{\mathcal{D}}_{c,5}$ & 5 & 66 & 89 & 36 & 83 & 23 & 98 & 49 & 68 & 69 & 77 & 58 & 62 & 97 & 34 & 91\\
	    
	     &$\omega^2$ & 5 & 57 & 78 & 33 & 91 & 21 & 94 & 43 & 63 & 65 & 70 & 78 & 48 & 93 & 31 & 96\\
	     
	     &$A_{n3}$ & 6 & 51 & 68 & 32 & 79 & 23 & 95 & 44 & 63 & 67 & 49 & 71 & 38 & 85 & 23 & 89\\ 
	     
	      &$Q_n$ & 5 & 60 & 84 & 27 & 86 & 15 & 96 & 37 & 57 & 54 & 80 & 22 & 59 & 96 & 30 & 90\\ 
	   
	    &$\Delta_n$ & 4 & 52 & 68 & 29 & 78 & 19 & 93 & 43 & 59 & 65 & 38 & 71 & 34 & 81 & 30 & 86\\
	    \hline
	    
	    \multirow{17}{*}{0.3} & $\widehat{J}^{\mathcal{P}}_{c,1}$ & 6 & 61 & 85 & 33 & 71 & 22 & 96 & 43 & 60 & 64 & 70 & 35 & 58 & 96 & 27 & 85\\
		&$\widehat{J}^{\mathcal{P}}_{c,2}$ & 5 & 57 & 82 & 32 & 67 & 20 & 94 & 40 & 56 & 63 & 60 & 43  & 50 & 94 & 26 & 82\\
		&$\widehat{J}^{\mathcal{P}}_{c,5}$ & 5 & 52 & 74 & 28 & 64 & 17 & 89 & 36 & 50 & 58 & 50 &  52 & 42 & 87 & 24 & 79\\
		
		&$\widehat{J}^{\mathcal{D}}_{c,1}$ & 5 & 64 & 89 & 28 & 80 & 20 & 96 & 39 & 62 & 53 & 91 & 24 & 72 & 98 & 34 & 93\\
	    &$\widehat{J}^{\mathcal{D}}_{c,2}$ & 5 & 62 & 89 & 31 & 74 & 20 & 95 & 41 & 60 & 58 & 81 & 36 & 66 & 98 & 32 & 88\\
	    &$\widehat{J}^{\mathcal{D}}_{c,5}$ & 5 & 56 & 84 & 29 & 68 & 18 & 92 & 37 & 52 & 59 & 62 & 51 & 54 & 94 & 27 & 83\\
	    
	    &$\widehat{M}^{\mathcal{P}}_{c,1}$ & 5 & 61 & 85 & 33 & 73 & 23 & 96 & 44 & 61 & 64 & 73 & 35 & 59 & 96 & 26 & 86\\
	    &$\widehat{M}^{\mathcal{P}}_{c,2}$ & 5 & 59 & 83 & 33 & 69 & 21 & 95 & 41 & 58 & 63 & 63 &  41& 52 & 94 & 26 & 84\\
	    &$\widehat{M}^{\mathcal{P}}_{c,5}$ & 5 & 54 & 77 & 30 & 65 & 18 & 91 & 37 & 52 & 60 & 53 & 50 & 43 & 89 & 24 & 80\\
	    
	    &$\widehat{M}^{\mathcal{D}}_{c,1}$ & 5 & 62 & 88 & 27 & 81 & 20 & 96 & 39 & 62 & 52 & 92 & 28 & 71 &98 & 34 & 94\\
	    &$\widehat{M}^{\mathcal{D}}_{c,2}$ & 5 & 62 & 88 & 29 & 77 & 20 & 96 & 40 & 60 & 57 & 85 & 37 & 67 & 98& 32 & 90\\
	    &$\widehat{M}^{\mathcal{D}}_{c,5}$ & 5 & 58 & 85 & 29 & 70 & 18 & 93 & 38 & 54 & 59 & 68 & 48 & 57 & 95& 29 & 85\\
	    
	    &$\omega^2$ & 3 & 41 & 61 & 21 & 82 & 15 & 79 & 30 & 45 & 47 & 55 & 64& 37 & 79& 26 & 94\\
	     
	    &$A_{n3}$ & 6 & 48 & 64 & 29 & 73 & 20 & 89 & 38 & 52 & 60 & 48 & 59 & 35 & 81& 20 & 86\\ 
	     
	      &$Q_n$ & 4 & 52 & 77 & 22 & 78 & 15 & 94 & 32 & 51 & 48 & 69 & 23 & 51 & 94& 28 & 85\\ 
	      
	    &$\Delta_n$ & 4 & 40 & 55 & 21 & 63 & 14 & 81 & 30 & 43 & 46 & 33 & 56 & 28 &66 & 22 & 77\\
		  \end{tabular}
	}
	\label{tab: comparison50slozena}
\end{table}
\section*{Appendix -- Proofs}\label{sec:apendix}
\begin{proof}[Proof of Theorem \ref{distribution}]
We show the statement of the theorem for the test based on Puri-Rubin characterization. In case of  Desu characterization, the proof is analogous.
The final aim is to show that $\sqrt{n}\widehat{U}_c(t)$ can be represented as 
\begin{align}\label{repMain}
\sqrt{n}\widehat{U}_c(t)=Z_n(t)+\sqrt{n}{R}_n(t),
\end{align} 
where $\{Z_n(t)\}$ converges to center Gaussian process and $\sqrt{n}{R}_n(t)=o_p(1).$
This will be done by: 
\begin{enumerate}
    \item \label{korak1} showing that $\sqrt{n}\widehat{U}_c(t)$ admits the presentation \eqref{repMain} where $Z_n(t)$ is a sum of i.i.d. random variables whose  summands don't depend on $\widehat{K}_c$;
    \item \label{korak2} showing that $\{Z_n(t)\}$ converges to centered Gaussian process for which is enough to show (see \cite{karatzas1991brownian} Theorem 4.15)  
    \begin{enumerate}
        \item that all finite-dimensional distributions of $Z_n(t)$ converges to multivariate normal distributions,
     \item $\{Z_n(t)\}$ is tight;
    \end{enumerate}
    \item applying Slutsky theorem for stochastic processes (see \cite{kosorok2008introduction}).
\end{enumerate}
\underline{The proof of \ref{korak1}:}
Due  to characterization, $\theta(t)=E(h(X'_1,X'_2;t))=0$.  Therefore, taking into account the mean preserving property of $\widehat{U}_{c}(t)$ (see e.g. \cite{datta2010inverse}) we have that $E(\widehat{U}_{c}(t))=0.$
Further, since $\widehat{U}_c(t)=U_c(t)+\widehat{U}_c(t)-U_c(t)$, we find appropriate representations for $\sqrt{n}U_c(s)$ as well as for
$\sqrt{n}(\widehat{U}_c(t)-U_c(t))$.
Let    
\begin{equation*}
\begin{split}
    \varphi_1(x_1,\delta_1)&\!=\!E\Big(\!\frac{{h}(X_1,X_2;t)\delta_1\delta_2}{K_c(X_1-)K_c(X_2-)}\Big|X_1=x_1,\delta_1\!\Big)
    =\frac{{h}_1(x_1;t)\delta_1}{K_c(x_1-)}.
    \end{split}
\end{equation*}
Then we have 
\begin{equation*}
\begin{split}
    \sqrt{n}U_c(t)\!-\!\frac{2}{\sqrt{n}}\sum_{i=1}^n \varphi_1(X_i,\delta_i;t)\!&=\!\sqrt{n}\frac{1}{\binom{n}{2}}\sum_{i<j}\bigg(\frac{{h}(X_i,X_j;t)\delta_1\delta_2}{K_c(X_i-)K_c(X_j-)}\!-\!\frac{{h}_1(X_i;t)\delta_1}{K_c(X_i-)}-\frac{{h}_1(X_j;t)\delta_2}{K_c(X_j-)}\bigg)\\&=\sqrt{n}\frac{1}{\binom{n}{2}}\sum_{i<j}\varphi_2(X_i,\delta_i;X_j,\delta_j;t).
    \end{split}
\end{equation*}
Since
\begin{align}\label{EH2}
    E(\varphi_2(X_1,\delta_1;X_2,\delta_2;t))^2&
    =E\bigg(\frac{{h}^2(X'_1,X'_2;t)}{K_c(X'_1-)K_c(X'_2-)}\bigg)-2E\bigg(\frac{{h}^2_1(X'_1;t)}{K_c(X'_1-)}\bigg)\nonumber
    \\&\leq 4\Big(E\Big(\frac{1}{K_c(X'_1-)}\Big)\Big)^2+8E\Big(\frac{1}{K_c(X'_1-)}\Big),
\end{align}
 and the last two summands are finite (due to assumptions of the theorem), it holds
 \begin{align}\label{Ucreprezentacija}
     \sqrt{n}U_c(t)=\frac{2}{\sqrt{n}}\sum_{i=1}^n \varphi_1(X_i,\delta_i;t)+\sqrt{n}R'_n(t),
 \end{align}
 where $\sqrt{n}R'_n(t)=o_p(1).$
Next, we have
\begin{align}\label{UcOminusUc}
    \sqrt{n}\widehat{U}_c(t)-\sqrt{n}U_c(t)
    &=\frac{\sqrt{n}}{\binom{n}{2}}\sum_{i<j}{h}(X_i,X_j;t)\delta_i\delta_j\bigg(\frac{K_c(X_i-)-\widehat{K}_c(X_i-)}{\widehat{K}_c(X_i-)K_c(X_i-)K_c(X_j-)}+\frac{K_c(X_j-)-\widehat{K}_c(X_j-)}{\widehat{K}_c(X_i-)\widehat{K}_c(X_j-)K_c(X_j-)}\bigg).
\end{align}
In the  expression above in the denominators,  we replace $\widehat{K}_c$ with its limit in probability $K_c$. The error of this approximation is equal to
\begin{align*}
    \sqrt{n}R''_n(t)&=
    -\frac{\sqrt{n}}{\binom{n}{2}}\sum_{i<j}{h}(X_i,X_j;t)\delta_i\delta_j\bigg(\frac{(\widehat{K}_c(X_i-)\!-\!K_c(X_i-)\!)\!(\widehat{K}_c(X_j-)\!-\!K_c(X_j-)\!)}{\widehat{K}_c(X_i-){K}_c(X_i-)K^2_c(X_j-)}\!+\!\frac{(\widehat{K}_c(X_i-)-K_c(X_i-))^2}{\widehat{K}_c(X_i-)K^2_c(X_i-)K_c(X_j-)}\\&\!+\!\frac{(\widehat{K}_c(X_j-)\!-\!K_c(X_j-)\!)^2}{\widehat{K}_c(X_i-)\widehat{K}_c(X_j-)K^2_c(X_j-)} \!\bigg)\!. 
\end{align*}
In order to show that $\sqrt{n}R''_n(t)$ uniformly converges to zero, it suffices to show that  process $\sqrt{n}\widetilde{R}''_n(t)$, defined with, 
\begin{equation*}
\begin{split}
\sqrt{n}\widetilde{R}''_n(t) =   \int B_n(x_1,x_2)\frac{{h}(x_1,x_2;t)}{\widehat{K}_c(x_1-)\widehat{K}_c(x_2-)}dW_n(x_1)dW_n(x_2)
\end{split}
\end{equation*}
where $W_n$ is empirical sub-distribution of the non-censored observations $X_i$ (i.e. those with $\delta_i=1$), and 
\begin{align*}
    B_n(x_1,x_2)&=\sqrt{n}\bigg(\!\frac{-(\widehat{K}_c(x_1-)-K_c(x_1-))(\widehat{K}_c(x_2-)\!-\!K_c(x_2-)\!)}{\widehat{K}_c(x_1-){K}_c(x_1-)K^2_c(x_2-)}+\!\frac{-(\widehat{K}_c(x_1-)\!-\!K_c(x_1-)\!)^2}{\widehat{K}_c(x_1-)K^2_c(x_1-)K_c(x_2-)}\\&+\!\frac{-(\widehat{K}_c(x_2-)\!-\!K_c(x_2-)\!)^2}{\widehat{K}_c(x_1-)\widehat{K}_c(x_2-)K^2_c(x_2-)}\!\bigg)\!\widehat{K}_c(x_1-)\widehat{K}_c(x_2-),
\end{align*}
 uniformly converges to zero.
That is because $\sqrt{n}\widetilde{R}''_n(t)$  is equal to
\begin{align*}
    \sqrt{n}\widetilde{R}''_n(t)&=
    -\frac{\sqrt{n}}{n^2}\sum_{i,j}{h}(X_i,X_j;t)\delta_i\delta_j\bigg( \frac{(\widehat{K}_c(X_i-)\!-\!K_c(X_i-)\!)(\widehat{K}_c(X_j-)\!-\!K_c(X_j-)\!)}{\widehat{K}_c(X_i-){K}_c(X_i-)K^2_c(X_j-)}\!+\!\frac{(\widehat{K}_c(X_i-)-K_c(X_i-))^2}{\widehat{K}_c(X_i-)K^2_c(X_i-)K_c(X_j-)}\\&\!+\!\frac{(\widehat{K}_c(X_j-)\!-\!K_c(X_j-)\!)^2}{\widehat{K}_c(X_i-)\widehat{K}_c(X_j-)K^2_c(X_j-)}\!\bigg)
\end{align*}
which  can be further expressed as
\begin{align*}
    {\sqrt{n}}\widetilde{R}''_n(t)\!=\!\frac{n(n\!-\!1)}{n^2} {\sqrt{n}}{R}''_n(t)\!+\!\frac{\sqrt{n}}{n^2}\sum_{i}h(X_i,X_i;t)\delta_i\frac{3(\widehat{K}_c(X_i-)\!-\!K_c(X_i-))^2}{\widehat{K}_c(X_i-)K^3_c(X_i-)}.
\end{align*}
Thus the term on the left hand side and the first term on the right hand side attain the same limit.

The process $B_n$ is tendentiously formed in such way that the application of Cauchy-Schwarz inequality will be enough to show that $\sqrt{n}R''_n(t) $ uniformly converges to zero.
After multiplying out in $B_n(x_1,x_2)$, the resulting first
summand might be written as 
\begin{align}\label{suma}
    -\sqrt{n}\frac{\widehat{K}_c(x_1-)\!-\!K_c(x_1-)}{\widehat{K}_c(x_1-)}
    \!\cdot\! \frac{\widehat{K}_c(x_2-)\!-\!K_c(x_2-)}{\widehat{K}_c(x_2-)}\!\cdot\! \frac{\widehat{K}_c(x_1-)}{{K}_c(x_1-)}\!\cdot\! \frac{\widehat{K}^2_c(x_2-)}{{K}^2_c(x_2-)}.
\end{align}
From the Theorem 2 (see \cite{ying1989note}) (with putting $K_c(x)=1-G(x)$) we have that 
$\sqrt{n}\frac{1-K(x)}{\widehat{K}_c(x)}(\widehat{K}_c(x)-K_c(x))$, where $ C(t)=\int_0^t\frac{d\Lambda(s)}{1-H(s-)}$ and $ K(t)=\frac{C(t)}{1+C(t)}$,
converges
to  Gaussian process with certain covariance matrix. That is equivalent to convergence of  $\sqrt{n}\frac{(\widehat{K}_c(x)-K_c(x))}{\widehat{K}_c(x)}$ (the first factor in \eqref{suma})  to some other   Gaussian process (see \cite{gill1983large}).
Therefore the second factor uniformly converges to 0, while the last two factors  converge to 1 (due to consistency of Kaplan-Meier estimator).
The similar reasoning hold for other two summands in the expression for $B_n$.

Applying Cauchy-Schwarz inequality we have 
\begin{align*}
|\sqrt{n}R''_n(t)|&\leq\bigg(\int B^2_n(x_1,x_2)dW_n(x_1)dW_n(x_2)\bigg)^{\frac{1}{2}}\bigg(\int \frac{{h}^2(x_1,x_2;t)}{\widehat{K}^2_c(x_1-)\widehat{K}^2_c(x_2-)}dW_n(x_1)dW_n(x_2)\bigg)^{\frac{1}{2}}
   \\&\leq\sup_{x_1,x_2\geq0}|B^2_n(x_1,x_2)|\bigg(\frac{1}{n^2}\sum_{i,j=1}^n\frac{{h}^2(X_i,X_j;t)\delta_i\delta_j}{\widehat{K}^2_c(X_i)\widehat{K}^2_c(X_j)}\bigg)^{\frac{1}{2}}.
\end{align*}
Since we have shown that  process $\{B_n(x_1,x_2)\}$ converges weakly  to zero process 
and 
$\frac{1}{n^2}\!\sum\limits_{i,j=1}^n\!\frac{{h}^2(X_i,X_j;t)\delta_i\delta_j}{\widehat{K}^2_c(X_i)\widehat{K}^2_c(X_j)}$ converges in probability to  $E\left(\frac{{h}^2(X'_i,X'_j;t)}{{K}_c(X'_i){K}_c(X_j)}\right)<4\Big(E\Big(\frac{1}{K_c(X'_1-)}\Big)\Big)^2$, 
we get that $\sqrt{n}R''_n(t)$ converges uniformly to 0. 

Further we have that for every $u>0$ 
\begin{align*}
    \sqrt{n}(\widehat{K}_n(u-)-K_n(u-))
    =\sqrt{n}K_c(u-)(\Lambda_c(u-)-\widehat{\Lambda}_c(u-))+o_p(1),
\end{align*}
where $\Lambda_c$ is cumulative hazard function ($\Lambda_c(u)=-\ln{K_c(u)}$), and $\widehat{\Lambda}_c$ Nelson-Aaalen estimate of hazard function. 
Then, \eqref{UcOminusUc} can be express as
\begin{equation}\label{U}
\begin{split}
    \sqrt{n}\widehat{U}_c(t)-\sqrt{n}U_c(t)&
    =\frac{\sqrt{n}}{\binom{n}{2}}\sum_{i<j}{h}(X_i,X_j;t)\delta_i\delta_j\bigg(\frac{\widehat{\Lambda}_c(X_i-)-\Lambda_c(X_i-)}{K_c(X_i-)K_c(X_j-)}+\frac{\widehat{\Lambda}_c(X_j-)-\Lambda_c(X_j-)}{{K}_c(X_i-)K_c(X_j-)}\bigg)+\sqrt{n}R'''_n(t)
\end{split}
\end{equation}
where $\sqrt{n}R'''_n(t)$ converges uniformly to zero.

The second term of \eqref{U} can be approximated with 
\begin{equation}\label{lambda_ocen-lambda}
\frac{2}{\sqrt{n}}\sum_{i=1}^n\frac{{h}_1(X_i;t)\delta_i}{K_c(X_i-)}(\widehat{\Lambda}_c(X_i-)-\Lambda(X_i-)),
\end{equation}
while for the error of this approximation it holds
\begin{equation}\label{razlika}
\begin{split}
    &\Big|\frac{2}{\sqrt{n}}\sum_{i=1}^n\frac{{h}_1(X_i;t)\delta_i}{K_c(X_i-)}(\widehat{\Lambda}_c(X_i-)-\Lambda_c(X_i-))-\frac{\sqrt{n}}{\binom{n}{2}}\sum_{i<j}{h}(X_i,X_j;t)\delta_i\delta_j\bigg(\frac{\widehat{\Lambda}_c(X_i-)-\Lambda_c(X_i-)}{K_c(X_i-)K_c(X_j-)}+\frac{\widehat{\Lambda}_c(X_j-)-\Lambda_c(X_j-)}{{K}_c(X_i-)K_c(X_j-)}\bigg)\Big|
       \\&\!{\leq}2\sup_{s\geq0}\!\sqrt{n}|\widehat{\Lambda}_c(s)\!-\!\Lambda_c(s)|\bigg(\frac{1}{{n}}\!\sum_{i=1}^n\!\Big(\!\frac{{h}_1(X_i;t)\delta_i}{K_c(X_i-)}\!-\!\frac{1}{n\!-\!1}\!\sum_{j\neq i}\! \frac{{h}(X_i,X_j;t)\delta_i\delta_j}{K_c(X_i-)K_c(X_j-)}\!\Big)^2\!\bigg)^\frac{1}{2}\!.
\end{split}
\end{equation}
Further it follows 
\begin{equation*}
\begin{split}
    E\bigg(\frac{1}{{n}}&\sum_{i=1}^n\Big(\frac{{h}_1(X_i;t)\delta_i}{K_c(X_i-)}-\frac{1}{n-1}\sum_{j\neq i} \frac{{h}(X_i,X_j;t)\delta_i\delta_j}{K_c(X_i-)K_c(X_j-)}\Big)^2\bigg)
    \\&=E\bigg[E\bigg(\bigg(\frac{{h}_1(X_1;t)\delta_1}{K_c(X_1-)}-\frac{1}{n-1}\sum_{j\neq 1} \frac{{h}(X_1,X_j;t)\delta_1\delta_j}{K_c(X_1-)K_c(X_j-)}\bigg)^2\bigg|X_1,\delta_1\bigg)\bigg].
\end{split}
\end{equation*}
Since for fixed $t$, $\frac{1}{n-1}\sum_{j\neq 1} \frac{{h}(x_1,X_j;t)\delta_1\delta_j}{K_c(x_1-)K_c(X_j-)}$ is $U-$statistic with mean value $\frac{{h}_1(x_1;t)\delta_1}{K_c(t_1-)}$, we have 
that the expression above is $O(n^{-1})$. 
Taking into account that $\sqrt{n}(\Lambda_c(u-)-\widehat{\Lambda}_c(u-))=O_p(1)$, we  conclude that \eqref{razlika} converges in probability  to 0.
Due to martingale representation of Nelson-Aalen estimation of hazard function,  \eqref{lambda_ocen-lambda} becomes
\begin{align}\label{finalnoLambda}
    \frac{2}{\sqrt{n}}\!\sum_{i=1}^n\!\frac{{h}_1(X_i;t)\delta_i}{K_c(X_i-)}\!\int\limits_0^{X_i-}\!\frac{d{M}^c(u)}{{Y}(u)}
  \! =\!\frac{2}{\sqrt{n}}\sum\limits_{i=1}^n\bigg(\int\limits_0^{\infty}\omega(u;t)dM^c_i(u)\bigg)\!+\!o_p(1)
\end{align}
where $y(u)=EY_1(u),$ 
$\omega(u;t)=\frac{1}{y(u)}\int_{u}^\infty{h}_1(x;t)dF(x)$.

Combining \eqref{Ucreprezentacija}, \eqref{U},  and \eqref{finalnoLambda}, we  get  \eqref{repMain} 
where 
\begin{align}\label{Zn}
    Z_n(t)=\frac{2}{\sqrt{n}}\sum\limits_{i=1}^n\bigg(\frac{{h}_1(X_i;t)\delta_i}{K_c(X_i-)}+\int\limits_0^{\infty}\omega(u;t)dM^c_i(u)\bigg).
\end{align}

\underline{The proof of \ref{korak2}:} 
Since, $P(X_1\geq u)=P(X'_1\geq u)P(C_1\geq u)=(1-F(u))K_c(u)$ and $\lambda_c(u)du=\frac{d(-\ln K_c(u-))}{du}=\frac{g(u)}{K_c(u-)}du$, we have
\begin{align*}
        &Var\left(\int\limits_0^{\infty}\omega(u;t)dM^c_1(u)\right)=\int\limits_0^\infty\omega^2(u;t)y(u)\lambda_c(u)du\\&\!=\!
         \int\limits_0^\infty\!\frac{g(u)}{(1\!-\!F(u))K^2_c(u-)}\!\bigg(\!\int\limits_u^\infty\!{h}_1(x;t)\!f(x)dx\!\bigg)^2\!du\!\leq\!\int\limits_0^\infty\!\frac{4(1\!-\!F(u))g(u)}{K^2_c(u-)}du< \infty.
\end{align*}
Hence, combining with \eqref{EH2} we have that
for fixed $t$ we have that $\frac{1}{\sqrt{n}}Z_n(t)$ is the sum of i.i.d. random variables  with finite variance, and, as such, has limiting normal distribution. 
Similarly, applying multivariate central limit theorem, one can show that the same hold for all  finite dimensional distributions of $Z_n$.

 To prove that the sequence $\{Z_n\}$ is tight, it is sufficient to show that for each $k\geq 1$ the sequence $Z_n$ restricted to $[0,k]$ is tight.
For the sake of showing it, we have to prove 
\begin{enumerate}
    \item[a)] $\sup\limits_{n\geq 1}E|Z_n(0)|^\nu=M<\infty$;
    \item[b)] $ \sup\limits_{n\geq 1}E|Z_n(t_2)-Z_n(t_1)|^\alpha\leq \psi(k)|t_2-t_1|^{1+\beta}, \forall k>0, 0\leq t_1,t_2\leq k,$
\end{enumerate}
for some positive constants $\alpha, \beta, \nu$ and function $\psi(k)$ (see \cite{karatzas1991brownian}).
Since $Z_n(0)=0,$ the first condition holds. 

Denoting $t_2=t+\rho$ and $t_1=t$,   $Z_n(t\!+\!\rho)\!-\!Z_n(t)$ can be expressed as 
\begin{equation}\label{Zn-Zn}
    \begin{split}
        \frac{2}{\sqrt{n}}\!\sum\limits_{i=1}^n\!\frac{(h_1(\!X_i;t\!+\!\rho\!)\!-\!h_1(X_i;t)\!)\delta_i}{K_c(X_i-\!)}\!+\!\frac{2}{\sqrt{n}}\!\sum\limits_{i=1}^n\!\int\limits_0^{\infty}\!(\omega(u;t\!+\!\rho)\!-\!\omega(u;t)\!)dM_i(u)\!.
    \end{split}
\end{equation}
We prove that both terms are tight. We can see that
\begin{equation*}
    \begin{split}
        E\bigg(\frac{2}{\sqrt{n}}\sum\limits_{i=1}^n\!\frac{(h_1(X_i;t+\rho)\!-\!h_1(X_i;t))\delta_i}{K_c(X_i-)}\bigg)^2   \!=\!4E\bigg(\!\frac{(h_1(X'_1;t+\rho)\!-\!h_1(X'_1;t))^2}{K_c(X^*_1-)}\!\bigg).
    \end{split}
\end{equation*}
Using mean value theorem it follows that
\begin{align*}
       |h_1(x;t+\rho)-h_1(x;t)|
      \leq |\rho|(36x+17),
\end{align*}
where $\xi\in(0,\rho)$. Then
\begin{equation*}
    \begin{split}
    &E\bigg(\frac{(h_1(X^*_1;t+\rho)-h_1(X^*_1;t))^2}{K_c(X^*_1-)}\bigg)
        \leq |\rho|^2E\bigg(\frac{(36X'_1+17)^2}{K_c(X'_1-)}\bigg).
    \end{split}
\end{equation*}
Thus
\begin{equation*}
    \begin{split}
        E\bigg(\frac{2}{\sqrt{n}}\sum\limits_{i=1}^n\frac{(h_1(X_i;t+\rho)-h_1(X_i;t))\delta_i}{K_c(X_i-)}\bigg)^2&\leq |\rho|^2\psi_1(k),
    \end{split}
\end{equation*}
where $\psi_1(k)=E\big(\frac{(36X'_1+17)^2}{K_c(X'_1-)}\big)<\infty$ 
from  the conditions of the theorem. 

For the second term of \eqref{Zn-Zn} it follows that
\begin{equation*}
    \begin{split}
        E\!\bigg(\!\frac{2}{\sqrt{n}}\!\sum\limits_{i=1}^n\!\int\limits_0^{\infty}\!(\omega(u;\!t\!+\!\rho)\!-\!\omega(u;\!t)\!)dM_i(u)\!\!\bigg)^2\!
        \!\!=\!4\!\int\limits_0^{\infty}\!(\omega(u;\!t\!+\!\rho)\!-\!\omega(u;\!t)\!)^2y(u)\!\lambda_c\!(u)du.
    \end{split}
\end{equation*}

Similar as before using the mean value theorem
\begin{align*}
    |\omega(u;t+\rho)-\omega(u;t)|
    \leq
    \frac{|\rho|}{P(X_1\geq u)}\int\limits_u^\infty (36x+17)dF(x)= \frac{|\rho|e^{-u}(36u+53)}{P(X_1\geq u)} .
\end{align*} 
Consequently, we have
\begin{align*}
        &E\bigg(\frac{2}{\sqrt{n}}\sum\limits_{i=1}^n\int\limits_0^{\infty}(\omega(u;t+\rho)-\omega(u;t))dM_i(u)\bigg)^2\leq 4\int\limits_0^{\infty}\bigg(\frac{|\rho|e^{-u}}{P(X_1\geq u)}(36u+53)\bigg)^2P(X_1\geq u)\frac{g(u)}{ K_c(u-)}du
        =|\rho|^2\psi_2(k),
\end{align*}
where $\psi_2(k)=\int_0^{\infty}\frac{e^{-2u}}{P(X_1\geq u)}(36u+53)^2\frac{g(u)}{ K_c(u-)}du<\infty$. 
Therefore
\begin{align*}
    E(Z_n(t\!+\!\rho)\!-\!Z_n(t))^2\leq 2|\rho|^2(\psi_1(k)+\psi_2(k))
\end{align*}
and condition b) for tightness holds. 
Finally, using Slutsky theorem for stochastic processes that appear in the  expression \eqref{repMain}  we complete the proof. 
\end{proof}
\begin{proof}[Proof of Theorem \ref{raspodelaL2}]
First step in the proof is to show
\begin{equation}\label{konver}
    \int\limits_0^\infty Z^2_n(t)e^{-at}dt\stackrel{D}{\to}\int\limits_0^\infty \eta^2(t)e^{-at}dt,
\end{equation}
where $\{\eta(t)\}$ is centered Gausian process with covariance \eqref{kovarijacija}. 

From  conditions a) and b) of Theorem \ref{distribution} we have that
\begin{align*}
   \int\limits_0^\infty cov(\eta(t),\eta(t))e^{-at}dt<\infty,
\end{align*}
hence, applying Tonellis theorem we conclude that
\begin{align*}
    \int\limits_0^\infty \eta^2(t)e^{-at}dt<\infty \text{ a.s.}
\end{align*}
The rest of this part  of the proof  goes along the same lines as in  \cite[Theorem 2.2.]{henze1997new}. 
Next, since $|\sqrt{n}\widehat{U}_c(t)-Z_n(t)|=\sqrt{n}R_n(t)$,
using triangular inequality we get that 
\begin{equation}\label{konverU}
    \int\limits_0^\infty (\sqrt{n}\widehat{U}_c(t))^2e^{-at}dt\stackrel{D}{\to}\int\limits_0^\infty \eta^2(t)e^{-at}dt.
\end{equation}
Covariance function is bounded, symmetric and positive definite, therefore from Mercer's theorem (\cite{van2013detection}) we have the following decomposition 
\begin{equation}\label{dekompozicija}
   cov(\eta(t_1),\eta(t_2))=\sum\limits_{i=1}^\infty{\upsilon_i}q_i(t_1)q_i(t_2),
\end{equation}
where $\upsilon_i$ are eigenvalues and $q_i(t)$ are corresponding eigenvectors of the integral operator $A$ defined in \eqref{operator}.
From the Karhunen-Loeve expansion of stochastic process (see e.g. \cite{van2013detection}) we have
\begin{equation}\label{razvojEta}
    \eta(t)=\sum\limits_{i=1}^\infty\sqrt{\upsilon_i}W_iq_i(t),
\end{equation}
where $W_i$ are independent random variables with $\mathcal{N}(0,1)$ distribution. Substituting the expressions \eqref{razvojEta} in \eqref{konverU} we complete the proof.  
\end{proof}
\begin{proof}[Proof of Theorem \ref{data}]

The proof is similar to the Step \ref{korak1} in the proof of Theorem \ref{distribution}  hence we omit it here.
\end{proof}

\begin{proof}[Proof of Theorem \ref{postojanost}]
 From the law of large numbers for U-statistics, and mean preserving property  we have that $$Z_n(t)\overset{P}{\to} E(h_1(X';t))=E(h(X'_1,X_2';t))=\mathcal{L}_{\omega_1}(t)-\mathcal{L}_{\omega_1}(t),$$
where $Z_n(t)$ is given by \eqref{Zn}.
In addition, from \eqref{postojanostKc} it follows that for each $t>0,$ $\widehat{U}_c(t)=Z_n(t)+o_p(1)$. Therefore, $\widehat{U}_c(t)$ converges in probability to $\theta(t)=E(h(X_1',X_2';t))=\frac{1}{2}E\Big(e^{-tX'_{1}}+e^{-tX'_{2}}-2e^{-t \psi^\mathcal{I}(X'_{1},X'_{2})}\Big) <2.$  Applying the continuous mapping  theorem  we complete the proof.
\end{proof}

\begin{proof}[Proof of Theorem \ref{teoremaBoot}]
Let $\{Z_n^*(t)\},$ be the bootstrapped version of process $\{\!Z_n\!(t)\!\}$ defined with 
\[Z^*_n(t)=\frac{2}{\sqrt{n}}\sum\limits_{i=1}^n\Bigg(\frac{h_1(X^*_i;t)\delta^*_i}{\widehat{K}_c(X^*_i-)}+\int\limits_0^\infty\widehat{\omega}(u;t)d{M}^{c*}_i(u)\Bigg),\] where $d{M}^{c*}_i(u)=d{N}^{c*}_i(u)+Y^{*}(u)d\widehat{\Lambda}_c(u)$ and ${M}^{c*}_i(u)$ is martingale with respect to appropriate filtration. 
We have to show that
\begin{enumerate}
\item\label{b1}  $\{Z_n^*(t)\},$ conditionally on $(X_1,\delta_1),...,(X_n,\delta_n),$ converges weakly to $\{\eta(t)\}$;
    \item\label{b2} $\sqrt{n}\widehat{U}^*_{n,c}(t)=Z^*_n(t)+\sqrt{n}R^*_n(t)$, where $||\sqrt{n}R^*_n(t)||\stackrel{P^*}{\rightarrow}0$.
\end{enumerate}

The proof of the first part consist of showing that all finite dimensional distributions are multivariate normal, that $\{Z^*_n(t)\}$ is tight and that its limiting covariance coincides with \eqref{kovarijacija}.

The first   follows from Lindberg-Feller CLT. In particular, since $\frac{\delta^*_i}{\widehat{K}_c(X^*_i)}=1-\int_0^\infty\frac{dM^{c*}_i(u)}{\widehat{K}_c(u)}$ (see \cite{robins1992recovery}), 
$Z_n^*(t)$ can be expressed as
\begin{align*}
    Z^*_n(t)\!&=\!
    \frac{2}{\sqrt{n}}\!\sum\limits_{i=1}^n\!\Bigg(\!h_1(X'_i;t)\!-\!\int\limits_0^\infty\!\Bigg(\!h_1(X'_i;t)\!-\!\frac{1}{1\!-\!F(u)}\!\int\limits_u^{\infty}\!h_1(v;t)\!dF(v)\!\Bigg)\!\frac{dM_i^{c*}(u)}{\widehat{K}_c(u-)}\!\Bigg)\!.
\end{align*}
From the martingale property of $M^{c*}$, 
we have  $E^*\!\left(\!Z_n^*\!(t)\!\right)\!=\!0$. Next
\begin{align}\label{LFCLT}
    &\lim\limits_{n\to\infty}\!\frac{1}{n\sigma_*^2(t)}\!\sum\limits_{i=1}^n\!E^*\!\Bigg[\!\!\Bigg(\!{h}_1(X'_i;t)\!-\!\int\limits_0^\infty\!\Bigg(\!{h}_1(X'_i;t)\!-\!\frac{\int\limits_u^{\infty}\!{h}_1(v;t)dF(v)}{1\!-\!F(u)}\!\Bigg)\!\frac{dM_i^{c*}(u)}{\widehat{K}_c(u-)}\!\Bigg)^2\nonumber\\&\times\! I\!\Bigg\{\!\Bigg|\!{h}_1(X'_i;t)\!-\!\int\limits_0^\infty\!\Bigg(\!{h}_1(X'_i;t)\!-\!\frac{\int\limits_u^{\infty}\!{h}_1(v;t)dF(v)}{1-F(u)}\!\Bigg)\!\frac{dM_i^{c*}(u)}{\widehat{K}_c(u-)}\!\Bigg|\!>\!\sqrt{n}\!\varepsilon\sigma_*(t)\!\Bigg\}\!\Bigg]\nonumber\\&
   =\lim\limits_{n\to\infty}\!\frac{1}{\sigma_*^2(t)}\!E^*\!\Bigg[\!\Bigg(\!{h}_1(X'_1;t)\!-\!\int\limits_0^\infty\!\Bigg(\!{h}_1(X'_1;t)\!-\!\frac{\int\limits_u^{\infty}\!{h}_1(v;t)dF(v)}{1-F(u)}\!\Bigg)\!\frac{dM_1^{c*}(u)}{\widehat{K}_c(u-)}\!\Bigg)^2\nonumber\\&\times\! I\!\Bigg\{\!\Bigg|\!{h}_1(X'_1;t)\!-\!\int\limits_0^\infty\!\Bigg(\!{h}_1(X'_1;t)\!-\!\frac{\!\int\limits_u^{\infty}\!{h}_1(v;t)dF(v)}{1-F(u)}\!\Bigg)\!\frac{dM_1^{c*}(u)}{\widehat{K}_c(u-)}\!\Bigg|\!>\!\sqrt{n}\varepsilon\sigma_*(t)\!\Bigg\}\!\Bigg]
\end{align}
where $\sigma^2_*(t)=Var^*(Z^*_n(t))$ converges in probability to $\sigma^2=cov(\eta(t),\eta(t))$. The last holds from
 $\widehat{K}_c(x)\stackrel{P}{\to}K_c(x)$, and $\max\limits_{i=1,...,n}\frac{K_c(X_i-)}{\widehat{K}_c(X_i-)}=O_p(1)$ (\cite{zhou1991some}) and $\widehat{\Lambda}(t)\stackrel{P}{\to}\Lambda(t)$.
Further, from the $L^2-$boundness (with respect to measure $dF$) of 
$${h}_1(X'_1;t)\!-\!\int\limits_0^\infty\!\Big(\!{h}_1(X'_1;t)\!-\!\frac{1}{1\!-\!F(u)}\!\int\limits_u^{\infty}\!{h}_1(v;t)dF(v)\!\Big)\!\frac{dM_1^{c*}(u)}{\widehat{K}_c(u-)},$$
 we get that that the expression in the nominator of \eqref{LFCLT} converges to zero.

One  can analogously show that all finite-dimensional distributions $\{Z^*_n(t)\}$ are multivariate normals. 
The proof of the tightness goes along the same lines as for the process $\{Z_n(t)\}.$ This completes the proof of part \ref{b1}.

The rest of the  proof goes in the same way as the proof of the  Theorem \ref{distribution} (parts \ref{korak1} and \ref{korak2} b.).
\end{proof}
\section*{Acknowledgement}
We would like to thank the anonymous referees for their valuable  remarks and suggestions that improved the paper.

\begin{thebibliography}{}

\bibitem[\protect\citeauthoryear{Akritas}{Akritas}{1988}]{akritas1988pearson}
Akritas, M.~G. (1988).
\newblock Pearson-type goodness-of-fit tests: the univariate case.
\newblock {\em Journal of the American Statistical Association\/}~{\em
  83\/}(401), 222--230.

\bibitem[\protect\citeauthoryear{Balakrishnan, Chimitova, and
  Vedernikova}{Balakrishnan et~al.}{2015}]{balakrishnan2015empirical}
Balakrishnan, N., E.~Chimitova, and M.~Vedernikova (2015).
\newblock An empirical analysis of some nonparametric goodness-of-fit tests for
  censored data.
\newblock {\em Communications in Statistics-Simulation and Computation\/}~{\em
  44\/}(4), 1101--1115.

\bibitem[\protect\citeauthoryear{Barlow and Proschan}{Barlow and
  Proschan}{1969}]{barlow1969note}
Barlow, R.~E. and F.~Proschan (1969).
\newblock A note on tests for monotone failure rate based on incomplete data.
\newblock {\em The Annals of Mathematical Statistics\/}~{\em 40\/}(2),
  595--600.

\bibitem[\protect\citeauthoryear{Cupari{\'c}, Milo{\v{s}}evi{\'c}, and
  Obradovi{\'c}}{Cupari{\'c} et~al.}{2019a}]{cuparic2019new}
Cupari{\'c}, M., B.~Milo{\v{s}}evi{\'c}, and M.~Obradovi{\'c} (2019a).
\newblock New consistent exponentiality tests based on {V}-empirical {L}aplace
  transforms with comparison of efficiencies.
\newblock {\em Preprint, arXiv:1904.00840\/}.

\bibitem[\protect\citeauthoryear{Cupari{\'c}, Milo{\v{s}}evi{\'c}, and
  Obradovi{\'c}}{Cupari{\'c} et~al.}{2019b}]{cuparic2018new}
Cupari{\'c}, M., B.~Milo{\v{s}}evi{\'c}, and M.~Obradovi{\'c} (2019b).
\newblock New ${L}^{2}$-type exponentiality tests.
\newblock {\em SORT\/}~{\em 43\/}(1), 25--50.

\bibitem[\protect\citeauthoryear{Datta, Bandyopadhyay, and Satten}{Datta
  et~al.}{2010}]{datta2010inverse}
Datta, S., D.~Bandyopadhyay, and G.~A. Satten (2010).
\newblock Inverse probability of censoring weighted {U}-statistics for
  right-censored data with an application to testing hypotheses.
\newblock {\em Scandinavian Journal of Statistics\/}~{\em 37\/}(4), 680--700.

\bibitem[\protect\citeauthoryear{Efron}{Efron}{1981}]{efron1981censored}
Efron, B. (1981).
\newblock Censored data and the bootstrap.
\newblock {\em Journal of the American Statistical Association\/}~{\em
  76\/}(374), 312--319.

\bibitem[\protect\citeauthoryear{Efron}{Efron}{1988}]{efron1988logistic}
Efron, B. (1988).
\newblock Logistic regression, survival analysis, and the {K}aplan-{M}eier
  curve.
\newblock {\em Journal of the American statistical Association\/}~{\em
  83\/}(402), 414--425.

\bibitem[\protect\citeauthoryear{Henze and Meintanis}{Henze and
  Meintanis}{2005}]{henze2005}
Henze, N. and S.~Meintanis (2005).
\newblock Recent and classical tests for exponentiality: a partial review with
  comparisons.
\newblock {\em Metrika\/}~{\em 61\/}(1), 29--45.

\bibitem[\protect\citeauthoryear{Henze and Wagner}{Henze and
  Wagner}{1997}]{henze1997new}
Henze, N. and T.~Wagner (1997).
\newblock A new approach to the {BHEP} tests for multivariate normality.
\newblock {\em Journal of Multivariate Analysis\/}~{\em 62\/}(1), 1--23.

\bibitem[\protect\citeauthoryear{Hollander and Proschan}{Hollander and
  Proschan}{1979}]{hollander1979testing}
Hollander, M. and F.~Proschan (1979).
\newblock Testing to determine the underlying distribution using randomly
  censored data.
\newblock {\em Biometrics\/}, 393--401.

\bibitem[\protect\citeauthoryear{Jim{\'e}nez-Gamero, Milo{\v{s}}evi{\'c}, and
  Obradovi{\'c}}{Jim{\'e}nez-Gamero et~al.}{2020}]{jimenez2020exponentiality}
Jim{\'e}nez-Gamero, M., B.~Milo{\v{s}}evi{\'c}, and M.~Obradovi{\'c} (2020).
\newblock Exponentiality tests based on basu characterization.
\newblock {\em Statistics\/}, 1--23.
\newblock DOI:10.1080/02331888.2020.1774768.

\bibitem[\protect\citeauthoryear{Karatzas and Shreve}{Karatzas and
  Shreve}{1991}]{karatzas1991brownian}
Karatzas, I. and S.~Shreve (1991).
\newblock {\em Brownian Motion and Stochastic Calculus}, Volume 113.
\newblock Springer Science \& Business Media.

\bibitem[\protect\citeauthoryear{Kattumannil and Anisha}{Kattumannil and
  Anisha}{2019}]{kattumannil2019simple}
Kattumannil, S.~K. and P.~Anisha (2019).
\newblock A simple non-parametric test for decreasing mean time to failure.
\newblock {\em Statistical Papers\/}~{\em 60\/}(1), 73--87.

\bibitem[\protect\citeauthoryear{Kosorok}{Kosorok}{2008}]{kosorok2008introduction}
Kosorok, M.~R. (2008).
\newblock {\em Introduction to empirical processes and semiparametric
  inference.}
\newblock Springer.

\bibitem[\protect\citeauthoryear{Koziol and Green}{Koziol and
  Green}{1976}]{koziol1976cramer}
Koziol, J.~A. and S.~B. Green (1976).
\newblock A {C}ramer-von {M}ises statistic for randomly censored data.
\newblock {\em Biometrika\/}~{\em 63\/}(3), 465--474.

\bibitem[\protect\citeauthoryear{Lawless}{Lawless}{2002}]{lawless2002statistical}
Lawless, J. (2002).
\newblock Statistical methods and model for lifetime data.
\newblock {\em Wiley\&Sons, New York\/}~{\em 52}.

\bibitem[\protect\citeauthoryear{Meintanis, Milo{\v{s}}evi{\'c}, and
  Obradovi{\'c}}{Meintanis et~al.}{2020}]{meintanis2017goodness}
Meintanis, S.~G., B.~Milo{\v{s}}evi{\'c}, and M.~Obradovi{\'c} (2020).
\newblock Goodness-of-fit tests in conditional duration models.
\newblock {\em Statistical Papers\/}~{\em 61}, 123--140.

\bibitem[\protect\citeauthoryear{Milo{\v{s}}evi{\'c} and
  Obradovi{\'c}}{Milo{\v{s}}evi{\'c} and
  Obradovi{\'c}}{2016}]{MilosevicObradovicPapers}
Milo{\v{s}}evi{\'c}, B. and M.~Obradovi{\'c} (2016).
\newblock New class of exponentiality tests based on {U}-empirical {L}aplace
  transform.
\newblock {\em Statistical Papers\/}~{\em 57\/}(4), 977--990.

\bibitem[\protect\citeauthoryear{Robins and Rotnitzky}{Robins and
  Rotnitzky}{1992}]{robins1992recovery}
Robins, J.~M. and A.~Rotnitzky (1992).
\newblock Recovery of information and adjustment for dependent censoring using
  surrogate markers.
\newblock In {\em AIDS epidemiology}, pp.\  297--331. Springer.

\bibitem[\protect\citeauthoryear{Strzalkowska-Kominiak and
  Gran{\'e}}{Strzalkowska-Kominiak and
  Gran{\'e}}{2017}]{strzalkowska2017goodness}
Strzalkowska-Kominiak, E. and A.~Gran{\'e} (2017).
\newblock Goodness-of-fit test for randomly censored data based on maximum
  correlation.
\newblock {\em SORT\/}~{\em 41\/}(1), 0119--138.

\bibitem[\protect\citeauthoryear{Torabi, Montazeri, and Gran{\'e}}{Torabi
  et~al.}{2018}]{torabi2018wide}
Torabi, H., N.~H. Montazeri, and A.~Gran{\'e} (2018).
\newblock A wide review on exponentiality tests and two competitive proposals
  with application on reliability.
\newblock {\em Journal of Statistical Computation and Simulation\/}~{\em
  88\/}(1), 108--139.

\bibitem[\protect\citeauthoryear{Van~Trees and Bell}{Van~Trees and
  Bell}{2013}]{van2013detection}
Van~Trees, H.~L. and K.~L. Bell (2013).
\newblock {\em Detection estimation and modulation theory}.
\newblock Wiley.

\bibitem[\protect\citeauthoryear{Ying}{Ying}{1989}]{ying1989note}
Ying, Z. (1989).
\newblock A note on the asymptotic properties of the product-limit estimator on
  the whole line.
\newblock {\em Statistics \& probability letters\/}~{\em 7\/}(4), 311--314.

\bibitem[\protect\citeauthoryear{Zhou}{Zhou}{1991}]{zhou1991some}
Zhou, M. (1991).
\newblock Some properties of the {K}aplan-{M}eier estimator for independent
  nonidentically distributed random variables.
\newblock {\em The Annals of Statistics\/}~{\em 19\/}(4), 2266--2274.

\end{thebibliography}

\end{document}